\newtheorem{Thm}{Theorem}
\newtheorem{Lem}[Thm]{Lemma}
\newcommand{\pN}[1]{\textsc{#1}}
\newcommand{\MaxMotif}{\pN{Maximum Graph Motif}}
\newcommand{\ClosestMotif}{\pN{Closest Graph Motif}}
\newcommand{\SetCover}{\pN{Set Cover}}
\newcommand{\defproblem}[3]{
  \vspace{1mm}
\noindent\fbox{
  \begin{minipage}{0.95\textwidth}
  #1 \\
  {\bf{Input:}} #2  \\
  {\bf{Question:}} #3
  \end{minipage}
  }
  \vspace{1mm}
}
\begin{document}

\title[Constrained Multilinear Detection]{Constrained Multilinear Detection\\and Generalized Graph Motifs*}

\thanks{*A preliminary conference abstract of this work has appeared as A. Bj\"orklund, P. Kaski, and \L. Kowalik, ``Probably optimal graph motifs,'' Proceedings of the 30th International Symposium on Theoretical Aspects of Computer Science (STACS 2013, Kiel, February 27--March 2, 2013), Leibniz International Proceedings in Informatics 20, Schloss Dagstuhl–Leibniz-Zentrum für Informatik, 2013, pp.~20--31.}

\author{Andreas Bj\"orklund}
\address{Andreas Bj\"orklund, Department of Computer Science, Lund University, Sweden}
\email{andreas.bjorklund@yahoo.se}
\author{Petteri Kaski}
\address{Petteri Kaski, Helsinki Institute for Information Technology HIIT \& 
Department of Information and Computer Science, Aalto University, Finland}
\email{petteri.kaski@aalto.fi}
\author{\L{}ukasz Kowalik}
\address{\L{}ukasz Kowalik, Institute of Informatics, University of Warsaw, Poland}
\email{kowalik@mimuw.edu.pl}

\begin{abstract}
We introduce a new algebraic sieving technique to detect constrained multilinear monomials in multivariate polynomial generating functions given by an evaluation oracle. As applications of the technique, we show an $O^*(2^k)$-time polynomial space algorithm for the $k$-sized {\sc Graph Motif} problem.
We also introduce a new optimization variant of the problem, called {\sc Closest Graph Motif} and solve it within the same time bound.
The {\sc Closest Graph Motif} problem encompasses several previously studied optimization variants, like {\sc Maximum Graph Motif}, {\sc Min-Substitute Graph Motif}, and {\sc Min-Add Graph Motif}. Finally, we provide a piece of evidence that our result might be essentially tight: the existence of an $O^*((2-\epsilon)^k)$-time algorithm for the {\sc Graph Motif} problem implies an $O((2-\epsilon')^n)$-time algorithm for {\sc Set Cover}.
\end{abstract}

\maketitle

\section{Introduction}

Many hard combinatorial problems can be reduced to the framework of detecting whether a multivariate polynomial $P(\vec x)=P(x_1,x_2,\ldots,x_n)$ has a monomial with specific properties of interest. In such a setup, $P(\vec x)$ is not available in explicit symbolic form but is implicitly defined by the problem instance at hand, and our access to $P(\vec x)$ is restricted to having an efficient algorithm for computing values of $P(\vec x)$ at points of our choosing. This framework was pioneered by Koutis~\cite{koutis-icalp08}, Williams~\cite{williams-ipl}, and Koutis and Williams~\cite{koutis-williams-icalp09} for use in the domain of parameterized subgraph containment problems, and it currently underlies the fastest known parameterized algorithms for many basic tasks such as path and packing problems~\cite{BHKK-narrow-sieves}. 

The present paper is motivated by recent works of Guillemot and Sikora~\cite{gs10} and Koutis~\cite{koutis-ipl}, who observed that functional motif discovery problems in bioinformatics are also amenable to efficient parameterized solution in the polynomial framework. Following Koutis~\cite{koutis-ipl}, applications in this domain require one to detect monomials in $P(\vec x)$ that are both {\em multilinear} and further {\em constrained} by means of colors assigned to variables $\vec x$, so that the combined degree of variables of each color in the monomial may not exceed a given maximum multiplicity for that color. Our objectives in this paper are to (i) present an improved algebraic technique for constrained multilinear detection, (ii) generalize the technique to allow for approximate matching at cost, and (iii) derive improved algorithms for graph motif problems, together with evidence that our algorithms may be optimal in the exponential part of their running time. We also introduce a new common 
generalization---the {\em closest graph motif problem}---that tracks the weighted edit distance between the target motif and each candidate pattern; this in particular generalizes both the minimum substitution and minimum addition variants of the graph motif problem introduced by Dondi, Fertin, and Vialette~\cite{DFV11-cpm}.

Let us now describe our main results in more detail, starting with algebraic contributions and then proceeding to applications in graph motifs. All the algebraic contributions rely essentially on what can be called the ``substitution-sieving'' method in characteristic 2 \cite{bjorklund-hamilton,BHKK-narrow-sieves}. 

\subsection{Multilinearity}
\label{sect:basic}

To ease the exposition and the subsequent proofs, it will be convenient to start with a known, non-constrained version of the substitution sieve that exposes multilinear monomials.

Let $P(\vec x)=P(x_1,x_2,\ldots,x_n)$ be 
a multivariate polynomial 
over a field of characteristic 2 
such that every monomial $x_1^{d_1}x_2^{d_2}\cdots x_n^{d_n}$ has 
total degree $d_1+d_2+\ldots+d_n=k$. A monomial is {\em multilinear} if 
$d_1,d_2,\ldots,d_n\in\{0,1\}$.

For an integer $n$, let us write $[n]=\{1,2,\ldots,n\}$. 
Let $L$ be a set of $k$ labels. 
For each index $i\in[n]$ and label $j\in L$, 
introduce a new variable $z_{i,j}$. 
Denote by $\vec z$ the vector of all variables $z_{i,j}$.

\begin{Lem}[Non-constrained multilinear detection \cite{bjorklund-hamilton,BHKK-narrow-sieves}]
\label{lem:basic}
The polynomial $P(\vec x)$
has at least one multilinear monomial if and only if the polynomial 
\begin{equation}
\label{eq:unconstrained-sieve}
Q(\vec z)=
\sum_{A\subseteq L}P\bigl(z_1^A,z_2^A,\ldots,z_n^A\bigr)
\end{equation}
is not identically zero, where $z_i^A=\sum_{j\in A} z_{i,j}$
for all $i\in[n]$ and $A\subseteq L$.
\end{Lem}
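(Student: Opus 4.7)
The plan is to compute $Q(\vec z)$ monomial by monomial: writing $P(\vec x)=\sum_m c_m \prod_i x_i^{d_i}$, set $Q_m(\vec z)=\sum_{A\subseteq L}\prod_i (z_i^A)^{d_i}$, so that $Q=\sum_m c_m Q_m$. I would then show (a) $Q_m=0$ whenever the monomial $m$ is non-multilinear, (b) $Q_m$ is a nonzero sum of distinct $\vec z$-monomials when $m$ is multilinear, and (c) the $\vec z$-monomials produced by different multilinear $m$ are disjoint, so there is no cancellation across the sum.

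First I would fully expand the product. Let $T=\{(i,\ell):1\le i\le n,\ 1\le \ell\le d_i\}$, which has size $k$ since $\sum d_i=k$. Then $\prod_i (z_i^A)^{d_i}=\sum_{f\colon T\to A}\prod_{(i,\ell)\in T} z_{i,f(i,\ell)}$. Exchanging the sums, the coefficient of $\prod z_{i,f(i,\ell)}$ in $Q_m$ equals the number of $A\subseteq L$ with $\mathrm{image}(f)\subseteq A$, which is $2^{k-|\mathrm{image}(f)|}$. In characteristic $2$ this vanishes unless $|\mathrm{image}(f)|=k$, i.e., $f\colon T\to L$ is a bijection. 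So
\begin{equation*}
Q_m(\vec z)=\sum_{f\colon T\to L\text{ bijection}}\prod_{(i,\ell)\in T}z_{i,f(i,\ell)}.
\end{equation*}

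For the non-multilinear case, suppose $d_{i_0}\ge 2$ for some $i_0$. I would exhibit the involution on bijections $f\colon T\to L$ that swaps the values $f(i_0,1)$ and $f(i_0,2)$. Because $f$ is a bijection, $f(i_0,1)\neq f(i_0,2)$, so the involution has no fixed points. Moreover, swapping two entries within the same row $i_0$ does not alter the multiset of variables in $\prod z_{i,f(i,\ell)}$, so each orbit contributes two identical monomials, canceling in characteristic $2$. Hence $Q_m=0$. For the multilinear case, $d_i\in\{0,1\}$ and $T$ identifies naturally with the support $S=\{i:d_i=1\}$, so $Q_m=\sum_{\sigma\colon S\to L\text{ bijection}}\prod_{i\in S}z_{i,\sigma(i)}$, a sum of $k!$ pairwise distinct squarefree monomials, hence nonzero.

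Finally, each $\vec z$-monomial appearing in $Q_m$ for a multilinear $m$ has first-coordinate set exactly $S=\mathrm{supp}(m)$, so distinct multilinear monomials of $P$ contribute to disjoint families of $\vec z$-monomials. Therefore $Q(\vec z)$ equals a sum, with no cancellation between different $m$, of the $c_m Q_m$ ranging over multilinear $m$, and $Q\not\equiv 0$ if and only if at least one multilinear $m$ has $c_m\neq 0$. The main obstacle is the non-multilinear case: one has to pick the right fixed-point-free involution and verify that it preserves the $\vec z$-monomial, and the mod-$2$ vanishing of $2^{k-|\mathrm{image}(f)|}$ for non-surjective $f$ is the algebraic engine that allows restricting to bijections in the first place.
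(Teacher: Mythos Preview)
Your proposal is correct and follows essentially the same approach as the paper: reduce to a single monomial by linearity, expand the product as a sum over functions from a $k$-element index set into $A$, use the count $2^{k-|\mathrm{image}(f)|}$ to restrict to surjective (hence bijective) $f$, kill the non-multilinear case with the involution swapping two values in a row with $d_{i_0}\ge 2$, and verify that distinct multilinear monomials yield disjoint $\vec z$-monomials by reading off the support from the first indices. The only cosmetic difference is that the paper packages the index set as an $n$-tuple of functions $f_i:[d_i]\to L$ rather than a single $f:T\to L$, but the content is identical.
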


\noindent
{\em Remark.} We can now observe the basic structure of the sieve \eqref{eq:unconstrained-sieve}: by making $2^k$ substitutions of the new variables $\vec z$ into $P(\vec x)$, we reduce the question of existence of a multilinear monomial in $P(\vec x)$ into the question whether the polynomial $Q(\vec z)$ is not identically zero. The latter can be tested probabilistically by one {\em evaluation} of $Q(\vec z)$ at a random point, which reduces via \eqref{eq:unconstrained-sieve} into {\em evaluations} of $P(\vec x)$ at $2^k$ points. This will be the basic structure in all our subsequent algorithm designs.

\subsection{Constrained multilinearity}
\label{sect:constrained}

We are now ready to state our main algebraic contribution.
Let $C$ be a set of at most $n$ {\em colors} such that each
color $q\in C$ has a {\em maximum multiplicity} $m(q)\in\{0,1,\ldots,n\}$.
Associate with each index $i\in[n]$ a color $c(i)\in C$. 
Let us say that a monomial $x_1^{d_1}x_2^{d_2}\cdots x_n^{d_n}$ 
is {\em properly colored} if the number of occurrences of each color 
is at most its maximum multiplicity, or equivalently, 
for all $q\in C$ it holds that $\sum_{i\in c^{-1}(q)}d_i\leq m(q)$.

Associate with each color $q\in C$ 
a set $S_q$ of $m(q)$ {\em shades} of the color $q$, such that $S_q$ 
and $S_{q'}$ are disjoint whenever $q\neq q'$. Let $S=\cup_{q\in C}S_q$.

For each index $i\in[n]$ and each shade $d\in S_{c(i)}$, introduce 
a new variable $v_{i,d}$. 
For each shade $d\in S$ and each label $j\in L$, introduce 
a new variable $w_{d,j}$.

\begin{Lem}[Constrained multilinear detection]
\label{lem:constrained}
The polynomial $P(\vec x)$
has at least one monomial that is both multilinear and properly colored
if and only if the polynomial 
\begin{equation}
\label{eq:constrained-sieve}
Q(\vec v,\vec w)=
\sum_{A\subseteq L}P\bigl(u_1^A,u_2^A,\ldots,u_n^A\bigr)
\end{equation}
is not identically zero, where 
$u_i^A=\sum_{j\in A} u_{i,j}$
and 
$u_{i,j}=\sum_{d\in S_{c(i)}}v_{i,d}w_{d,j}$
for all $i\in[n]$, $j\in L$, and $A\subseteq L$.
\end{Lem}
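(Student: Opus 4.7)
The plan is to reduce to Lemma~\ref{lem:basic} by factoring the substitution into two layers. First I would introduce auxiliary indeterminates $y_d$ for $d \in S$, set $\tilde u_i = \sum_{d \in S_{c(i)}} v_{i,d} y_d$, and define the intermediate polynomial $P'(\vec v, \vec y) = P(\tilde u_1, \ldots, \tilde u_n)$. A direct rearrangement gives $u_i^A = \tilde u_i|_{y_d \mapsto w_d^A}$ with $w_d^A = \sum_{j \in A} w_{d,j}$, so that
\[
Q(\vec v, \vec w) = \sum_{A \subseteq L} P'\bigl(\vec v,\, w_1^A, \ldots, w_{|S|}^A\bigr).
\]
Treating $\vec v$ as scalars---passing to the rational function field in the $v_{i,d}$---every monomial of $P'$ has total degree exactly $k$ in $\vec y$. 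Applying Lemma~\ref{lem:basic} to $P'$ then shows $Q \not\equiv 0$ if and only if $P'$ has a $\vec y$-multilinear monomial whose coefficient (a polynomial in $\vec v$) is nonzero.

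The remaining task is to translate $\vec y$-multilinearity of $P'$ into $\vec x$-multilinearity together with proper coloring of $P$. Fix a $\vec y$-multilinear target $Y = y_{d_1} \cdots y_{d_k}$ with distinct shades $d_s$. Expanding shows that the coefficient of $Y$ in $P'$ equals $\sum_{(M, \tau)} c_M \prod_{(i,\ell)} v_{i, \tau(i,\ell)}$, where $M = x_1^{e_1} \cdots x_n^{e_n}$ ranges over monomials of $P$ with coefficient $c_M$ and $\tau$ ranges over bijections from the slot set $\{(i,\ell) : 1 \le \ell \le e_i\}$ to $\{d_1, \ldots, d_k\}$ with $\tau(i,\ell) \in S_{c(i)}$.

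A two-step cancellation analysis then finishes the argument. If some $e_i \geq 2$, permuting the $\ell$-coordinates at index $i$ yields $e_i!$ distinct bijections giving identical contributions, which vanish in characteristic~$2$; hence only $\vec x$-multilinear $M = x_{i_1} \cdots x_{i_k}$ survive. For such $M$, distinct $(M, \tau)$ produce distinct $\vec v$-monomials $\prod_s v_{i_s, \tau(i_s)}$ (from which $(M, \tau)$ is uniquely recoverable), so no further cancellation occurs. Thus the coefficient of some $Y$ is nonzero if and only if there exists a multilinear $M$ with $c_M \neq 0$ admitting a selection of distinct shades in $S_q$ for all its $c(i_s)=q$ occurrences at every color $q$---exactly the proper coloring condition $|\{s : c(i_s) = q\}| \leq m(q)$. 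The main bookkeeping obstacle will be verifying that distinct $(M, \tau)$ genuinely yield distinct $\vec v$-monomials, so that the only cancellation in characteristic~$2$ comes from the controlled $e_i!$ absorption.
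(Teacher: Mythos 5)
Your proposal is correct, but it organizes the argument differently from the paper, and the comparison is worth recording. The paper does not invoke Lemma~\ref{lem:basic} as a black box: it reruns that lemma's cancellation with the $u_{i,j}$ in the role of the $z_{i,j}$, reducing $Q$ to $\sum_{g:K\to L\ \mathrm{bijective}}\prod_{i\in K}u_{i,g(i)}$ for each multilinear monomial $\prod_{i\in K}x_i$ of $P$, then expands the $u$'s into a sum over valid shade assignments $h:K\to S$ and kills every non-injective $h$ by pairing each bijection $g$ with the mate obtained by transposing its values at the two colliding indices. Your factorization $u_i^A=\tilde u_i|_{y_d\mapsto w_d^A}$ instead exhibits the constrained sieve as the \emph{unconstrained} sieve applied to the intermediate polynomial $P'(\vec v,\vec y)$ in the shade variables, so Lemma~\ref{lem:basic} applies verbatim over $\mathbb{F}(\vec v)$ (still characteristic $2$, and $P'$ is $\vec y$-homogeneous of degree $k$ since each $\tilde u_i$ is $\vec y$-linear --- both hypotheses you rightly check). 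The price is that the two cancellations swap sides: non-injectivity of the shade assignment is absorbed by Lemma~\ref{lem:basic} on the $w$-side, while non-multilinearity of $P$ in $\vec x$ --- which the paper disposes of first --- must be handled inside the $\vec v$-coefficient of a fixed multilinear $Y$, which you do with the slot-permutation orbits. That step is sound: because $\tau$ is injective the group $\prod_i\mathrm{Sym}(e_i)$ acts freely, so every orbit has size $\prod_i e_i!$, even whenever some $e_i\ge 2$, and each orbit contributes a single $\vec v$-monomial an even number of times, so each non-multilinear $M$ contributes exactly zero; and for multilinear $M$ the $\vec v$-monomial $\prod_s v_{i_s,\tau(i_s)}$ has distinct first indices, so $(M,\tau)$ is recoverable and the bookkeeping obstacle you flagged does go through. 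Your route is more modular; the paper's avoids the auxiliary $y_d$ and produces the explicit surviving sum over triples $(K,h,g)$, which it then reuses nearly unchanged in the proof of Lemma~\ref{lem:cost-constrained}.
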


\noindent
{\em Remark.} This lemma enables us to (probabilistically) detect a constrained multilinear monomial of degree $k$ using $2^k$ evaluations of $P(\vec x)$, assuming that we are working over a sufficiently large field of characteristic 2. This solves an open problem posed by Koutis at a Dagstuhl seminar in 2010~\cite{koutis-dagstuhl}, and forms the core of our algorithm in Theorem~\ref{thm:max-motif}.

\subsection{Cost-constrained multilinearity}
\label{sect:cost-constrained}

The previous setting admits a generalization where we associate {\em costs}
to decisions to arrive at a proper coloring. 
Accordingly, we assume that no coloring $c:[n]\rightarrow C$ has 
been fixed {\em a priori}, but instead associate with each 
index $i\in [n]$ and each color $q\in C$ a nonnegative 
integer $\kappa_i(q)$, the {\em cost} of assigning the color $q$ to $i$. 

Once a coloring $c:[n]\rightarrow C$ has been assigned, 
the {\em cost} of a monomial $x_1^{d_1}x_2^{d_2}\cdots x_n^{d_n}$
in the assigned coloring is $\sum_{i\in[n]} d_i\kappa_i(c(i))$. 
The objective now becomes to detect a multilinear monomial that has 
the minimum cost under a proper coloring.

For each index $i\in[n]$ and each shade $d\in S$, introduce 
a new variable $v_{i,d}$. 
For each shade $d\in S$ and each label $j\in L$, introduce 
a new variable $w_{d,j}$. Introduce a new variable $\eta$.

\begin{Lem}[Cost-constrained multilinear detection]
\label{lem:cost-constrained}
The polynomial $P(\vec x)$
has at least one monomial that is both multilinear and admits
a proper coloring with cost $\sigma$ 
if and only if the polynomial 
\begin{equation}
\label{eq:cost-constrained-sieve}
Q(\vec v,\vec w,\eta)=
\sum_{A\subseteq L}P\bigl(u_1^A,u_2^A,\ldots,u_n^A\bigr)
\end{equation}
has at least one monomial whose degree in the variable
$\eta$ is $\sigma$, where 
$u_i^A=\sum_{j\in A} u_{i,j}$
and 
\begin{equation}
\label{eq:cost-u}
u_{i,j}=\sum_{q\in C}\eta^{\kappa_i(q)}\sum_{d\in S_q}v_{i,d}w_{d,j}
\end{equation}
for all $i\in[n]$, $j\in L$, and $A\subseteq L$.
\end{Lem}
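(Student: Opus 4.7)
My plan is to mirror the proof of Lemma~\ref{lem:constrained}, treating $\eta$ as a bookkeeping variable that accumulates the cost contributions. I first expand $Q$ monomial by monomial. For a fixed monomial $\mu = \prod_i x_i^{d_i}$ of $P$ (of total degree $k$), substituting $x_i \mapsto u_i^A$ and summing over $A \subseteq L$ reduces, via the standard parity count $|\{A : \mathrm{image}(f) \subseteq A \subseteq L\}| = 2^{|L|-|\mathrm{image}(f)|}$ from the proof of Lemma~\ref{lem:basic}, to a sum over bijections $f : E_\mu \to L$, where $E_\mu = \{(i,l) : 1 \leq l \leq d_i\}$. Further expanding each $u_{i, f(i,l)}$ via \eqref{eq:cost-u} writes the contribution of $\mu$ as a sum over configurations---a bijection $f$ together with a color $q_{i,l} \in C$ and a shade $d_{i,l} \in S_{q_{i,l}}$ for each $(i,l)$---producing the monomial
\[
\eta^{\sum_{(i,l)} \kappa_i(q_{i,l})} \prod_{(i,l)\in E_\mu} v_{i,d_{i,l}}\,w_{d_{i,l},f(i,l)}.
\]

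The first substantive step is to eliminate non-multilinear monomials $\mu$. If some $d_{i_0} \geq 2$, I would define a free involution on configurations that swaps the triples $(q_{i_0,1}, d_{i_0,1}, f(i_0,1))$ and $(q_{i_0,2}, d_{i_0,2}, f(i_0,2))$. The involution preserves the produced monomial by commutativity, and it is fixed-point free because $f$ is a bijection and so $f(i_0,1) \neq f(i_0,2)$. Pairing under this involution cancels the entire contribution of such $\mu$ in characteristic~$2$.

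For a multilinear monomial $\mu = \prod_{i \in T} x_i$, configurations are simply a bijection $f : T \to L$ together with $(q_i, d_i)$ for $i \in T$. Fix a target monomial $V\cdot W\cdot \eta^\sigma$ and write $V = \prod_{i \in T} v_{i,\delta(i)}$ (necessarily multilinear in $\vec v$, since the indices $i$ are distinct): then $V$ uniquely determines $T$, $\delta : T \to S$, the induced coloring $c(i) =$ the unique $q$ with $\delta(i) \in S_q$, and the required cost $\sigma = \sum_i \kappa_i(c(i))$. The number of bijections $f : T \to L$ producing a fixed $W = \prod_i w_{\delta(i),f(i)}$ equals $\prod_{d_0 \in S} |\delta^{-1}(d_0)|!$, which is odd in characteristic~$2$ exactly when $\delta$ is injective; contributions from non-injective $\delta$ thus cancel in the sum.

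Finally, I close the equivalence by observing that an injective $\delta : T \to S$ with $\delta(i) \in S_{c(i)}$ exists if and only if $c$ is a proper coloring, i.e.\ $|c^{-1}(q)| \leq m(q) = |S_q|$ for every $q \in C$. Thus a surviving monomial of $Q$ of $\eta$-degree $\sigma$ encodes exactly a multilinear monomial of $P$ together with a proper coloring of cost $\sigma$, and conversely any such pair yields the claimed nonzero contribution. The main delicate point will be the joint bookkeeping of cancellations---the non-multilinear-$\mu$ involution plus the factorial-based parity count for non-injective $\delta$---to confirm that the only surviving monomials of $Q$ are exactly those captured by the claimed correspondence, with no spurious contributions sneaking through from cross terms.
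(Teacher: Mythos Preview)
Your proposal is correct and follows essentially the same route as the paper: reduce via the sieve of Lemma~\ref{lem:basic} to bijections $f$, expand each $u_{i,j}$ to introduce shade assignments, cancel non-injective shade maps in characteristic~2, and reconstruct $(T,\delta,f)$ from each surviving monomial, with $\eta$ recording the cost of the induced coloring $c=\pi\circ\delta$. The only differences are cosmetic---you expand the $u$'s before cancelling non-multilinear $\mu$ (using an involution on full configurations rather than on label-functions alone), and you count bijections via a factorial parity rather than the paper's explicit mate involution on $g$---but these are equivalent ways of organizing the same cancellation.
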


\noindent
{\em Remark.} The previous lemma may be extended to track multiple
cost parameters $\eta_1,\eta_2,\ldots$ simultaneously. 
In fact, this will be convenient
in our algorithm underlying Theorem~\ref{thm:closest-motif}. We also
observe that in applications one typically works with a (random)
evaluation in the variables $\vec v$ and $\vec w$, but seeks to 
recover an explicit polynomial in $\eta$ as the output of the sieve,
typically by a sequence of evaluations at distinct points,
followed by interpolation to recover the polynomial in $\eta$.

\subsection{Graph motif problems}
The application protagonist for our algebraic tools will be the
following problem and its generalization.

\medskip
\defproblem{%
\MaxMotif{}~\cite{DFV09-cpm}%
}{%
A connected, undirected host graph $H$ with $n$ vertices and $e$ edges, 
a multiset $M$ of colors over a base color set $C$,
a coloring $c:V(H)\rightarrow C$ for the vertices of $H$, 
and a positive integer $k$.
}{%
Is there a subset $K\subseteq V(H)$ of size $k$ such that 
(a) the subgraph induced by $K$ in $H$ 
is connected, and 
(b) the multiset $c(K)$ of colors is a subset of $M$, taking
multiplicities into account?
}
\medskip

\noindent
{\em Background.} Graph motif problems were introduced by Lacroix et al.~\cite{LacroixFS06} and motivated by applications in bioinformatics, specifically in metabolic network analysis. The \MaxMotif{} problem was introduced by Dondi, Fertin, and Vialette~\cite{DFV09-cpm}. It is known to be NP-hard even when the given graph is a tree of maximum degree 3 and each color may occur at most once~\cite{fellows-jcss}. However, in practice the parameter $k$ is expected to be small, what motivates the research on so-called FPT algorithms parameterized by $k$, that is, algorithms with running times bounded from above by a function $f(k)$ times a function polynomial in the input size, which is commonly abbreviated by $O^*(f(k))$. Indeed, Fellows et al.~\cite{Fellows-ICALP} discovered that such an algorithm exists, which was followed by a rapid series of improvements to $f(k)$ \cite{Fellows-ICALP,cpm,gs10}, culminating in the $O^*(2.54^k)$-time algorithm of Koutis~\cite{koutis-ipl} (see Table~\ref{tbl:history}).

\begin{table}[ht]
 \begin{center}
\begin{tabular}{lll}
\hline Paper & Running time & Approach\\\hline
Fellows et al.~\cite{Fellows-ICALP} & $O^*(87^k)$, implicit & Color-coding\\
Betzler et al.~\cite{cpm} & $O^*(4.32^k)$ & Color-coding\\
Guillemot and Sikora~\cite{gs10} & $O^*(4^k)$ & Multilinear detection\\
Koutis~\cite{koutis-ipl} & $O^*(2.54^k)$ & Constrained multilinear detection\\
this work & $O^*(2^k)$ & Constrained multilinear detection\\\hline
\end{tabular}
\end{center}
\caption{Progress on FPT algorithms for the $k$-sized graph motif problem}
\label{tbl:history}
\end{table}

From a high-level prespective the two key ideas underlying our main theorem in this section are (i) an observation of Guillemot and Sikora~\cite{gs10} that {\em branching walks}~\cite{nederlof-steiner} yield an efficient polynomial generating function for connected sets, and (ii) Lemma~\ref{lem:constrained} that builds on work by Koutis~\cite{koutis-ipl}.

\medskip
\noindent
{\em Our results.} 
The coefficient $\mu=O(\log k\log \log k\log \log \log k)$ in the following theorem reflects the time complexity of basic arithmetic 
(addition, multiplication) in a finite field of size $O(k)$
and characteristic 2 \cite{algebraic-complexity-theory}.

\begin{Thm}
\label{thm:max-motif}
There exists a Monte Carlo algorithm for 
\MaxMotif{} that runs in $O(2^kk^2 e\mu)$ time and in polynomial space, 
with the following guarantees: 
(i) the algorithm always returns NO when given a NO-instance as input,
(ii) the algorithm returns YES with probability at least $1/2$ when given a YES-instance as input.
\end{Thm}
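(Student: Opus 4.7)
The plan is to reduce \MaxMotif{} to constrained multilinear monomial detection and then apply Lemma~\ref{lem:constrained}. Let $n = |V(H)|$. I introduce one variable $x_v$ per vertex $v \in V(H)$ and construct a polynomial $P(\vec x)$ over characteristic $2$ whose multilinear monomials are exactly the products $\prod_{u \in K} x_u$ ranging over connected $k$-vertex subsets $K$ of $H$. I then instantiate Lemma~\ref{lem:constrained} with label set $L = [k]$, color set $C$, the given coloring $c\colon V(H) \to C$, and maximum multiplicities $m(q)$ equal to the multiplicity of $q$ in the motif multiset $M$. Under this setup, a multilinear monomial $\prod_{u \in K} x_u$ is properly colored iff $c(K) \subseteq M$ as multisets, so a constrained multilinear monomial of $P$ exists iff the \MaxMotif{} instance is a YES-instance.

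For $P(\vec x)$ I would take the branching walks polynomial used by Guillemot and Sikora~\cite{gs10}, following Nederlof~\cite{nederlof-steiner}. A \emph{branching walk} of order $\ell$ rooted at $v$ is a rooted tree on $\ell$ nodes with a left-to-right ordering of children and a labeling of nodes by $V(H)$, such that the root is labeled $v$ and any two tree-adjacent nodes carry labels adjacent in $H$. Setting $B_v^\ell(\vec x) = \sum_T \prod_{u \in T} x_{\mathrm{label}(u)}$ and $P(\vec x) = \sum_{v \in V(H)} B_v^k(\vec x)$, the crucial claim is that in characteristic $2$ the multilinear monomials of $P$ are exactly $\prod_{u \in K} x_u$ for connected $k$-subsets $K$. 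Branching walks with a repeated label contribute only non-multilinear monomials and are harmless, while for each fixed connected $K$ the label-distinct branching walks supported on $K$ pair off under a suitable fixed-point-free involution (swapping the two smallest isomorphic children, say) except for an odd canonical remainder, so the coefficient of $\prod_{u \in K} x_u$ equals $1$ in the field. Establishing this parity argument cleanly is the main obstacle; I would lift it directly from~\cite{gs10,nederlof-steiner}.

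With $P(\vec x)$ in hand, the algorithm implements the sieve of Lemma~\ref{lem:constrained}. Pick a uniformly random evaluation point for the variables $(\vec v, \vec w)$ in a finite field of characteristic $2$ and size $\Theta(k)$, then compute $\sum_{A \subseteq L} P(u_1^A, \ldots, u_n^A)$. Each of the $2^k$ summands is a single evaluation of $P$ at an explicit point of the field, which by a standard dynamic program on the branching walk recurrence (splitting off the deepest child at the root) fills a $k \times n$ table in $O(k^2 e)$ arithmetic operations; each operation costs $O(\mu)$ time. Iterating over $A \subseteq L$ in sequence and maintaining a single accumulated field element gives overall running time $O(2^k k^2 e \mu)$ and polynomial space, because the DP workspace is reused across iterations.

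For correctness, Lemma~\ref{lem:constrained} guarantees that the polynomial $Q(\vec v, \vec w)$ vanishes identically when the instance is NO, so the test never reports nonzero and the algorithm never outputs YES in that case. When the instance is YES, $Q \not\equiv 0$ has total degree $O(k)$ in the evaluated variables, so by the Schwartz--Zippel lemma a uniformly random evaluation in a field of size $\Theta(k)$ with a sufficiently large hidden constant yields a nonzero value with probability at least $1/2$, in which case the algorithm correctly outputs YES. This delivers the one-sided error promised by the theorem.
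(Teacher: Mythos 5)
Your overall architecture (branching walks feeding Lemma~\ref{lem:constrained}, a $k\times n$ dynamic program per subset $A$, Schwartz--Zippel over a field of size $\Theta(k)$) matches the paper, but the step you yourself flag as ``the main obstacle'' is a genuine gap, and the fix is not the parity argument you sketch. If the generating polynomial uses only vertex variables, then every simple branching walk spanning a fixed connected set $K$ contributes the \emph{same} monomial $\prod_{u\in K}x_u$, so its coefficient in characteristic $2$ is the number of such walks mod $2$. With the natural canonical (properly ordered) walks this count is $k\cdot\tau(H[K])$, where $\tau$ is the number of spanning trees of $H[K]$: each walk is determined by a choice of root $s\in K$ and a spanning tree, with the child order forced. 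This is even whenever $k$ is even or $\tau(H[K])$ is even --- e.g.\ for $H[K]=C_4$ one gets $4\cdot 4=16\equiv 0$ --- so the monomial cancels and a YES-instance can be mapped to the identically zero polynomial, breaking guarantee~(ii) with probability $1$, not just $1/2$. There is no ``odd canonical remainder,'' and neither Guillemot--Sikora nor Nederlof supplies the statement you would need to lift (they do not work with this characteristic-$2$ substitution sieve on a $\vec y$-free polynomial).

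The paper sidesteps the parity question entirely by enriching the monomials rather than counting walks: the fingerprint of a walk $W=(T,\varphi)$ is $\prod_{\{a,b\}\in E(T),\,a<b} y_{(\varphi(a),\varphi(b))}x_{\varphi(b)}$ with two fresh variables $y_{(u,v)},y_{(v,u)}$ per edge of $H$, and Lemma~\ref{lem:fingerprint} shows that a \emph{simple} properly ordered walk can be reconstructed from its fingerprint, so distinct simple walks yield distinct monomials and nothing cancels; non-simple walks are harmlessly non-multilinear in $\vec x$, exactly as you argue. The compatibility of these auxiliary $\vec y$ variables with Lemma~\ref{lem:constrained} is covered by the remark at the end of \S\ref{sect:proofs} (the sieve only constrains multilinearity and coloring in the $\vec x$ variables). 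If you replace your $\vec y$-free polynomial with this fingerprinted one --- the $\vec y$'s are simply set to independent uniform field elements at evaluation time, the DP recurrence and the $O(2^k k^2 e\mu)$ bound are unaffected, and the total degree becomes $3k-1$, still handled by a field of size at least $6k$ --- the rest of your proof goes through as written.
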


\noindent
{\em Remark.} 
We observe that the algorithm in Theorem~\ref{thm:max-motif} runs in linear time in the number of edges $e$ in the host graph $H$. Furthermore, the exponential part $2^k$ of the running time is caused by the sieve \eqref{eq:constrained-sieve}, implying that the algorithm can be executed in parallel on up to $2^k$ processors with essentially linear speedup. A caveat of the algorithm is that it solves only the YES/NO-decision problem, however, it can be extended to extract a solution set $K$ at additional multiplicative cost $k$ to the running time; this extension will be pursued elsewhere. 

\subsection{Weighted edit distance and the closest motif problem}

A natural generalization of the basic graph motif framework is to allow for weighted inexact matches between the ``target'' motif $M$ and a connected induced subgraph. Such variants have been studied in the literature, in particular by Dondi, Fertin, and Vialette~\cite{DFV11-cpm} in the context of either (a) addition of colors to $M$ or (b) substitutions of colors in $M$. We state both problems below as decision problems parameterized by $k$. 

\medskip
\defproblem{%
{\sc Min-Add Graph Motif}~\cite{DFV11-cpm}%
}{%
A connected, undirected host graph $H$, 
a multiset $M$ of colors over a base color set $C$,
a coloring $c:V(H)\rightarrow C$ for the vertices of $H$, 
a positive integer $k$, and a nonnegative integer $d$.
}{%
Is there a subset $K\subseteq V(H)$ of size $k$ such that 
(a) the subgraph induced by $K$ in $H$ 
is connected, and 
(b) it holds that $M\subseteq c(K)$ and $|c(K)\setminus M|\leq d$, 
taking multiplicities into account?
}
\medskip
\defproblem{%
{\sc Min-Substitute Graph Motif}~\cite{DFV11-cpm}%
}{%
A connected, undirected host graph $H$, 
a multiset $M$ of colors over a base color set $C$,
a coloring $c:V(H)\rightarrow C$ for the vertices of $H$, 
a positive integer $k$, and a nonnegative integer $d$.
}{%
Is there a subset $K\subseteq V(H)$ of size $k$ such that 
(a) the subgraph induced by $K$ in $H$ 
is connected, and 
(b) it holds that $M$ can be transformed to $c(K)$ by at most
$d$ substitutions of colors, taking multiplicities into account?
}
\medskip

Koutis~\cite{koutis-ipl} gives an $O^*(2.54^k)$-time algorithm for
{\sc Min-Add Graph Motif} and an $O^*(5.08^k)$-time algorithm 
for {\sc Min-Substitute Graph Motif}. 

Our objective here is to generalize the graph motif framework to weighted edit distance between $M$ and $c(K)$ by introducing a common generalization, the \ClosestMotif{} problem. We then use Lemma~\ref{lem:cost-constrained} to obtain an $O^*(2^k)$-time algorithm for the problem.

We start with some preliminaries to give a precise meaning to ``closest''
via the weighted edit distance.
Let $M$ be a multiset over a base set of colors $C_0$. 
Let us allow to change $M$ by means of three basic operations: 
\begin{enumerate}
\item[(S)] substitute one occurrence of a color $q\in M$ with a color $q'\in C_0$,
\item[(I)] insert one occurrence of a color $q\in C_0$ to $M$, and
\item[(D)] delete one occurrence of a color $q\in M$ from $M$.
\end{enumerate}
Associate with each basic operation (S), (I), (D) an nonnegative integer
{\em cost} $\sigma_{\mathrm{S}}$,
$\sigma_{\mathrm{I}}$,
$\sigma_{\mathrm{D}}$.

For multisets $M$ and $N$ over $C_0$, the {\em cost} 
(or {\em weighted edit distance}) to match $M$ with $N$ 
is the minimum cost of a sequence of basic operations that 
transforms $M$ to $N$, where the cost of the sequence is
the sum of costs of the basic operations in the sequence.

\medskip
\defproblem{%
\ClosestMotif%
}{%
A connected, undirected host graph $H$ with $n$ vertices and $e$ edges, 
a multiset $M$ of colors over a base color set $C_0$,
a coloring $c:V(H)\rightarrow C_0$ for the vertices of $H$, 
nonnegative integer costs $\sigma_{\mathrm{S}}$, $\sigma_{\mathrm{I}}$, $\sigma_{\mathrm{D}}$, 
a threshold cost $\tau$, and 
a positive integer $k$.%
}{%
Is there a subset $K\subseteq V(H)$ of size $k$ such that 
(a) the subgraph induced by $K$ in $H$ 
is connected and 
(b) the cost to transform the multiset $M$ into the multiset $c(K)$
is at most $\tau$?%
}
\medskip

\noindent
{\em Our results.} Our main result in this section is as follows.

\begin{Thm}
\label{thm:closest-motif}
There exists a Monte Carlo algorithm for 
\ClosestMotif{} that runs in $O((2^kk^4+|C_0|k^3)e\mu)$ time and 
in polynomial space, with the following guarantees: 
(i) the algorithm always returns NO when given a NO-instance as input,
(ii) the algorithm returns YES with probability at least $1/2$ when given a YES-instance as input.
\end{Thm}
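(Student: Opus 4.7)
The plan is to follow the blueprint of Theorem~\ref{thm:max-motif} step for step, but replace the invocation of Lemma~\ref{lem:constrained} by the two-parameter extension of Lemma~\ref{lem:cost-constrained} announced in the remark after that lemma. First I would construct, via the branching-walk technique of Nederlof~\cite{nederlof-steiner} used by Guillemot and Sikora~\cite{gs10}, a polynomial $P(\vec x)$ over vertex variables $\{x_v:v\in V(H)\}$, homogeneous of degree $k$, whose multilinear monomials are in bijection with pairs (root in $V(H)$, connected induced $k$-vertex subgraph of $H$ containing that root). The standard construction makes $P$ evaluable at any given point in $O(k^2 e\mu)$ time and polynomial space.

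The core of the reduction is the cost encoding. Let $C=C_0\cup\{\star\}$ for a fresh ``null'' color $\star$, set multiplicities $m(q)=m_M(q)$ for $q\in C_0$ and $m(\star)=k$, and introduce two tracking variables $\eta_1$ (``substitution meter'') and $\eta_2$ (``insertion/deletion meter''). For each vertex $v\in V(H)$ of true color $c_H(v)$, assign the cost vector $\kappa_v(q)=(0,0)$ if $q=c_H(v)$, $\kappa_v(q)=(1,0)$ if $q\in C_0\setminus\{c_H(v)\}$, and $\kappa_v(q)=(0,1)$ if $q=\star$. A properly colored degree-$k$ multilinear monomial then corresponds to a connected $k$-subset $K\subseteq V(H)$ equipped with an assignment that sends each $v\in K$ either to a slot of $M$ of its own color (free), to a slot of $M$ of a different color (substitution, cost $\sigma_{\mathrm{S}}$), or to $\star$ (insertion, cost $\sigma_{\mathrm{I}}$, combined with a forced deletion of the resulting unused slot at cost $\sigma_{\mathrm{D}}$). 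A direct count then shows that the weighted edit distance from $M$ to the multiset $c_H(K)$ equals
\begin{equation*}
\sigma_{\mathrm{S}} d_1 + (\sigma_{\mathrm{I}}+\sigma_{\mathrm{D}}) d_2 + \sigma_{\mathrm{D}}(|M|-k),
\end{equation*}
where $d_1,d_2$ are the $\eta_1,\eta_2$-degrees of the monomial, and where the proper-coloring inequality $\sum_q a_q\le |M|$ automatically forces $d_2\ge\max(0,k-|M|)$, keeping the additive offset $\sigma_{\mathrm{D}}(|M|-k)$ consistent in both regimes $k\le|M|$ and $k>|M|$.

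Applying the two-parameter extension of Lemma~\ref{lem:cost-constrained} to $P$, the instance is YES if and only if the sieve polynomial $Q(\vec v,\vec w,\eta_1,\eta_2)$ contains a nonzero monomial $\eta_1^{d_1}\eta_2^{d_2}$ with $0\le d_1,d_2\le k$ and $\sigma_{\mathrm{S}} d_1+(\sigma_{\mathrm{I}}+\sigma_{\mathrm{D}}) d_2\le \tau-\sigma_{\mathrm{D}}(|M|-k)$. To test this, I would fix a uniformly random point for $(\vec v,\vec w)$ in a finite field of characteristic $2$ of size $\Omega(k^2)$ and recover the bivariate polynomial in $(\eta_1,\eta_2)$ by standard interpolation over a $(k+1)\times(k+1)$ grid of distinct evaluation points. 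This requires $(k+1)^2$ evaluations of $Q$, each essentially a call to the Theorem~\ref{thm:max-motif} algorithm and costing $O(2^k k^2 e\mu)$ time. Precomputing the per-color shade sums $\sum_{d\in S_q}v_{i,d}w_{d,j}$ that appear inside \eqref{eq:cost-u} adds a one-time $O(|C_0| k^3 e\mu)$ overhead, giving the claimed $O((2^k k^4 + |C_0| k^3)e\mu)$ total running time in polynomial space. One-sided correctness is inherited from Lemma~\ref{lem:cost-constrained}; a Schwartz--Zippel estimate over a field of size $\Omega(k^2)$ keeps the failure probability below $1/2$.

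The main obstacle I anticipate is verifying that this cost encoding faithfully tracks the weighted edit distance: one must match every optimal edit sequence from $M$ to some $c_H(K)$ with a properly colored multilinear monomial of $P$ at the same total cost, and conversely, across both regimes $k\le|M|$ (where ``phantom'' deletions of unused slots in $M$ must be charged through the constant offset) and $k>|M|$ (where the multiplicity constraint automatically forces at least $k-|M|$ insertions). Once this equivalence is pinned down, the sieve, the random evaluation, the interpolation step, and the running-time accounting all go through along the pattern already established for Theorem~\ref{thm:max-motif}.
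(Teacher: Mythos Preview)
Your proposal is correct and follows essentially the same route as the paper: the same branching-walk generating polynomial, the same extension $C=C_0\cup\{\star\}$ with $m(\star)=k$, the same two-parameter cost encoding (your $(\eta_1,\eta_2)$ are the paper's $(\eta_{\mathrm S},\eta_{\mathrm{ID}})$), the same edit-distance identity $\sigma_{\mathrm S}d_1+(\sigma_{\mathrm I}+\sigma_{\mathrm D})d_2+\sigma_{\mathrm D}(|M|-k)$, and the same $(k{+}1)^2$-point interpolation. The paper packages your ``direct count'' as a separate lemma on optimum edit sequences and uses a field of size only $\Theta(k)$ rather than $\Theta(k^2)$ (the sieve polynomial has total degree $3k-1$, so $2^b\ge 6k$ already suffices for Schwartz--Zippel), but these are cosmetic differences.
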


\noindent
{\em Remark.} Similar remarks apply to Theorem~\ref{thm:closest-motif} as with Theorem~\ref{thm:max-motif}. In particular, the implementation of \eqref{eq:cost-constrained-sieve} with two cost parameters enables essentially linear parallel speedup on up to $2^kk^2$ processors.

\subsection{A lower bound} 

There is some evidence that the exponential 
part $2^k$ in the running time of the algorithms in 
Theorem~\ref{thm:max-motif} and Theorem~\ref{thm:closest-motif} may
be the best possible. Our approach is to proceed by reduction from
the set cover problem.

\medskip
\defproblem{%
\SetCover%
}{%
An integer $t$ and a family of sets $\mathcal{S} = \{S_1,S_2,\ldots,S_m\}$ over {\em the universe} $U=\bigcup_{j=1}^m S_j$ with $n=|U|$.%
}{%
Is there a subfamily of $t$ sets $S_{i_1}, S_{i_2}, \ldots, S_{i_t}$ such that $U=\bigcup_{j=1}^t S_{i_j}$?
}
\medskip

We show that for any $\epsilon>0$ the existence of an $O^*((2-\epsilon)^k)$-time algorithm for \MaxMotif{} implies an $O((2-\epsilon')^n)$-time algorithm for \SetCover{}, for some $\epsilon'>0$. Thus, instead of trying to improve our algorithm one should rather directly attack \SetCover{}, for which all attempts to obtain a $O((2-\epsilon)^n)$-time algorithm have failed, despite extensive effort. Indeed, the nonexistence of such an algorithm is already used as a basis for hardness results~\cite{ccc}. Furthermore, it is conjectured~\cite{ccc} that an $O((2-\epsilon)^n)$-time algorithm for \SetCover{} contradicts the Strong Exponential Time Hypothesis (SETH), which states that if $k$-CNF SAT can be solved in $O^*(c_k^n)$ time, then $\mbox{lim}_{k\rightarrow \infty} c_k=2$. This conjecture is further supported by the fact that the number of solutions to an instance of \SetCover{} cannot be computed in $O((2-\epsilon)^n)$ time for any $\epsilon >0$ unless SETH fails~\cite{ccc}. A yet further consequence of such 
a counting algorithm would be the existence of an $O((2-\epsilon')^n)$-time algorithm to compute the permanent of an $n\times n$ integer matrix~\cite{bjorklund-matchings}.

\begin{Thm}
\label{thm:red}
If \MaxMotif{} can be solved in $O((2-\epsilon)^{k})$ time for some $\epsilon>0$ then \SetCover{} can be solved in $O((2-\epsilon')^{n})$ time, for some $\epsilon'>0$.
Moreover, this holds even for instances of \MaxMotif{} restricted to one of the following two extreme cases:
\begin{enumerate}
 \item each color may occur at most once, or
 \item there are exactly two colors.
\end{enumerate}
\end{Thm}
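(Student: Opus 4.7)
My plan is to exhibit a polynomial-time reduction from \SetCover{} to \MaxMotif{} that produces a motif size $k=n+t$, where $n$ is the universe size and $t$ the target cover size; combined with a case split on $t$, this will convert a hypothetical $O^*((2-\epsilon)^k)$-time algorithm for \MaxMotif{} into an $O((2-\epsilon')^n)$-time algorithm for \SetCover{}.

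The base construction is the following. Given the instance $(U, \mathcal{S}, t)$ with $U=\{u_1,\ldots,u_n\}$ and $\mathcal{S}=\{S_1,\ldots,S_m\}$, introduce one element vertex $w_i$ per $u_i$. For case~(2)---exactly two colors $A$ and $B$---introduce one set vertex $v_j$ per $S_j$; color every element vertex with $A$ and every set vertex with $B$; put an edge $\{w_i,v_j\}$ whenever $u_i\in S_j$, and make the set vertices pairwise adjacent (a clique). The motif is $M=\{A^n,B^t\}$ and $k=n+t$. For case~(1)---each color occurring at most once---introduce $t$ distinct copies $v_j^{(1)},\ldots,v_j^{(t)}$ of each set vertex; color the element vertices with pairwise distinct colors $\mathrm{col}_1,\ldots,\mathrm{col}_n$ and every $v_j^{(l)}$ with a fresh slot color $\mathrm{slot}_l$; preserve adjacencies, and additionally make all set vertices of distinct slots pairwise adjacent. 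The motif is $M=\{\mathrm{col}_1,\ldots,\mathrm{col}_n,\mathrm{slot}_1,\ldots,\mathrm{slot}_t\}$ and again $k=n+t$. In either case the equivalence ``$K$ is a motif-respecting connected subset iff the $t$ selected sets cover $U$'' holds: every element vertex must appear in $K$ (since $H$ contains exactly $n$ vertices of color $A$ in case~(2), and exactly one per color $\mathrm{col}_i$ in case~(1)), and for the induced subgraph to be connected every element vertex must neighbor at least one chosen set vertex, forcing the coverage condition. Also, the clique among the selected set vertices guarantees connectivity on the set side.

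The main technical obstacle is extracting a $(2-\epsilon')^n$ bound for every $\epsilon>0$: since $t$ may be as large as $n$, we have $k\leq 2n$, which gives only $(2-\epsilon)^{2n}$ time and a genuine speedup over $2^n$ only for $\epsilon>2-\sqrt{2}$. To handle arbitrary $\epsilon>0$, I would fix a constant $\alpha=\alpha(\epsilon)\in(0,1)$ with $(2-\epsilon)^{1+\alpha}<2$ (possible by continuity since $(2-\epsilon)^{1}<2$). For $t\leq\alpha n$ a single invocation of the hypothetical algorithm on the reduced instance suffices, giving time $O^*(((2-\epsilon)^{1+\alpha})^n)=O((2-\epsilon_1)^n)$ with $\epsilon_1=2-(2-\epsilon)^{1+\alpha}>0$. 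For the harder regime $t>\alpha n$ I plan a bootstrap: enumerate a candidate split of $U$ into a ``main'' part of size $(1-\alpha)n$ to be covered by a committed portion of the $t$ sets and a ``residual'' part of size $\alpha n$ to be handled by the small-$t$ subroutine, where the outer enumeration contributes a factor of at most $\binom{n}{\alpha n}\leq 2^{H(\alpha)n}$ (binary entropy, tending to $0$ as $\alpha\to 0$) while the inner factor is $(2-\epsilon)^{O(\alpha n)}$. Calibrating $\alpha$ sufficiently small in terms of $\epsilon$ keeps the product below $(2-\epsilon')^n$ for some $\epsilon'>0$. The delicate bookkeeping---ensuring the outer split covers every cover without double-counting and that the residual sub-instance truly stays in the small-$t$ regime---is the core step where the argument can fail and must be handled with care, quite possibly via a more clever ``compression'' of the set-selection part of the motif than the naive one I described.
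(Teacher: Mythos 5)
Your reduction gadget is essentially the paper's: element vertices whose colors force all of $U$ into the solution, plus $t$ ``slots'' of set vertices whose selection is forced by the color multiplicities, with coverage extracted from connectivity (the paper attaches $t$ copies of the family to a root vertex $r$ and gets $k=n+t+1$; your clique/slot-adjacency variant giving $k=n+t$ is a cosmetic difference). Both of your colorings (all-distinct and two-color) check out. So the first half of your proof is sound and matches the paper.

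The genuine gap is in the second half, and it is exactly where you flag it. The paper does \emph{not} prove the passage from $O((2-\epsilon)^{n+t})$ to $O((2-\epsilon')^{n})$; it imports it as a black box (Theorem~\ref{thm:setcover}, extracted from Theorem~4.4 of Cygan et al.~\cite{ccc}). Your substitute argument for the regime $t>\alpha n$ does not work as sketched: enumerating a partition of the \emph{universe} into a main part of size $(1-\alpha)n$ and a residual of size $\alpha n$ does not shrink the cover-size parameter of the subproblems you must then solve. The main part is still a \SetCover{} instance on $(1-\alpha)n$ elements whose optimal cover may use up to $(1-\alpha)n$ sets, so it is again in the large-$t$ regime and the recursion never reaches your small-$t$ base case; the $\binom{n}{\alpha n}$ outer enumeration buys you nothing toward reducing $t$. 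The known fix (and the idea behind~\cite{ccc}) compresses the \emph{family} rather than the universe: assuming without loss of generality $t\le n$, replace $\mathcal{S}$ by the (polynomially many, for constant $q$) unions of $q$ sets from $\mathcal{S}$ and lower the target to roughly $\lceil t/q\rceil$, so that the parameter sum drops to about $n(1+1/q)$; choosing $q=q(\epsilon)$ with $(2-\epsilon)^{1+1/q}<2$ then yields the claimed $(2-\epsilon')^n$ bound. As written, your large-$t$ case is unproved; you need either this grouping argument or a citation of~\cite{ccc} to close it.
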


\subsection{Organization}
Our two main lemmas,
Lemma~\ref{lem:constrained} and Lemma~\ref{lem:cost-constrained},
are proved in \S\ref{sect:proofs}. Theorem~\ref{thm:max-motif}
is proved in \S\ref{sect:max-motif}. Theorem~\ref{thm:closest-motif}
is proved in \S\ref{sect:closest-motif}. Theorem~\ref{thm:red}
is proved in \S\ref{sect:lower-bound}.

\section{Algebraic Tools}
\label{sect:proofs}

This section proves Lemma~\ref{lem:constrained} and 
Lemma~\ref{lem:cost-constrained}. We start with a proof
of Lemma~\ref{lem:basic} that will act as a building block
of both proofs.

\subsection{Proof of Lemma~\ref{lem:basic}.}
It will be convenient to work with a polynomial consisting of a
single monomial, after which it will be easy to extend the analysis 
to an arbitrary polynomial. So suppose that 
\[
P(x_1,x_2,\ldots,x_n)=x_1^{d_1}x_2^{d_2}\cdots x_n^{d_n}
\]
with $d_1+d_2+\ldots+d_n=k$. 
We must show that the expression 
$\sum_{A\subseteq L}P\bigl(z_1^A,z_2^A,\ldots,z_n^A\bigr)$
is not identically zero in characteristic 2
if and only if $d_1,d_2,\ldots,d_n\in\{0,1\}$.

Let us start by simplifying the expression into a more convenient form.
Recalling that $z_i^A=\sum_{j\in A}z_{i,j}$ for $i\in[n]$ and 
expanding the product--sum into a sum--product, we have
\begin{eqnarray}
\notag
\sum_{A\subseteq L}P\bigl(z_1^A,z_2^A,\ldots,z_n^A\bigr)
&=&\sum_{A\subseteq L}\prod_{i=1}^n\biggl(\sum_{j\in A}z_{i,j}\biggr)^{d_i}\\
\notag
&=&\sum_{A\subseteq L}\prod_{i=1}^n\sum_{f_i:[d_i]\rightarrow A}\prod_{\ell=1}^{d_i}z_{i,f_i(\ell)}\\
\label{eq:asum}
&=&\sum_{A\subseteq L}\sum_{f_1:[d_1]\rightarrow A}\sum_{f_2:[d_2]\rightarrow A}\cdots\sum_{f_n:[d_n]\rightarrow A}\prod_{i=1}^n\prod_{\ell=1}^{d_i}z_{i,f_i(\ell)}\,.
\end{eqnarray}
The outer sum in \eqref{eq:asum} is over all subsets $A\subseteq L$ and the
inner sums range over all $n$-tuples $f=(f_1,f_2,\ldots,f_n)$
of functions $f_i:[d_i]\rightarrow A$ with $i\in[n]$. 

Let us fix an arbitrary $n$-tuple $f=(f_1,f_2,\ldots,f_n)$
of functions $f_i:[d_i]\rightarrow L$ with $i\in[n]$. 
Let us define the {\em image} of $f$ by
\[
I(f)=f_1([d_1])\cup f_2([d_2])\cup\cdots\cup f_n([d_n])\,.
\]
Now let us consider the outer sum over subsets $A\subseteq L$ 
in \eqref{eq:asum}. Observe that for a fixed $A\subseteq L$, 
our fixed $n$-tuple $f=(f_1,f_2,\ldots,f_n)$ occurs exactly 
once in the inner sums of \eqref{eq:asum} if and only if $I(f)\subseteq A$.
That is to say, the fixed $f$ occurs exactly once for each $A$ with
$I(f)\subseteq A\subseteq L$. The number of such $A$ is $2^{|L|-|I(f)|}$, 
which is even---and hence cancels in characteristic~2---%
if and only if $I(f)\neq L$. 

Let us say that $f$ is {\em surjective} if $I(f)=L$. 
Since all but surjective $f$ cancel, from \eqref{eq:asum} and
the previous analysis we thus have
\begin{equation}
\label{eq:fsum}
\sum_{A\subseteq L}P\bigl(z_1^A,z_2^A,\ldots,z_n^A\bigr)
=
\sum_{\substack{f=(f_1,f_2,\ldots,f_n)\\f\text{ surjective}}}\prod_{i=1}^n\prod_{\ell=1}^{d_i}z_{i,f_i(\ell)}\,.
\end{equation}
Next we show that \eqref{eq:fsum} is identically zero 
unless $d_1,d_2,\ldots,d_n\in\{0,1\}$. 

So suppose there exists at least one bad index $b\in[n]$ with $d_b\geq 2$. 
Let us fix $b$ to be the minimum such index.
Consider an arbitrary surjective $n$-tuple $f=(f_1,f_2,\ldots,f_n)$.
Since $|L|=k=d_1+d_2+\ldots+d_n$ and $f$ is surjective, we must have that for every $i\in [n]$ the function $f_i$ is bijective, 
in particular thus $f_b(1)\neq f_b(2)$. 

Define the {\em mate} $f'$ of $f$ by setting 
$f_i'=f_i$ for all $i\in [n]\setminus\{b\}$ and
\[
f_b'(\ell)
=\begin{cases}
f_b(2)    & \text{if $\ell=1$;}\\
f_b(1)    & \text{if $\ell=2$;}\\
f_b(\ell) & \text{otherwise.}
\end{cases}
\]
Observe that $f'\neq f$ and that $f''=f$. Thus, the set of
all surjective $f$ partitions into disjoint pairs $\{f,f'\}$
with 
\[
\prod_{i=1}^n\prod_{\ell=1}^{d_i}z_{i,f_i(\ell)}
=
\prod_{i=1}^n\prod_{\ell=1}^{d_i}z_{i,f_i'(\ell)}\,.
\]
Thus, all monomials in \eqref{eq:fsum} have an even coefficient
and hence cancel in characteristic~2 unless $d_1,d_2,\ldots,d_n\in\{0,1\}$.

So suppose that $d_1,d_2,\ldots,d_n\in\{0,1\}$. Since $d_1+d_2+\ldots+d_n=k$,
we have that the set $K=\{i\in[n]:d_i=1\}$ has size $k$. 
Furthermore, associated with each surjective $f$ there is a unique
bijection $g:K\rightarrow L$ defined for all $i\in K$ by $g(i)=f_{i}(1)$.
We thus have
\begin{equation}
\label{eq:gsum}
\sum_{A\subseteq L}P\bigl(z_1^A,z_2^A,\ldots,z_n^A\bigr)
=\sum_{\substack{g:K\rightarrow L\\g\text{ bijective}}}
\prod_{i\in K}z_{i,g(i)}\,.
\end{equation}
In particular, from each monomial $\prod_{i\in K}z_{i,g(i)}$
we can recover both the set $K$ and the bijection $g:K\rightarrow L$,
implying that no cancellation happens in characteristic 2. 
Furthermore, from $K$ we can recover 
$P(x_1,x_2,\ldots,x_n)=\prod_{i\in K}x_i$. 

The lemma now follows by linearity. Indeed, an arbitrary multivariate 
polynomial $P(x_1,x_2,\ldots,x_n)$ is a sum of monomials 
$x_1^{d_1}x_2^{d_2}\cdots x_n^{d_n}$. $\qed$

\subsection{Proof of Lemma~\ref{lem:constrained}.}
We obtain cancellation in characteristic 2 using identical arguments
to the proof of Lemma~\ref{lem:basic}, up to and including adapting
\eqref{eq:gsum} to the setting of Lemma~\ref{lem:constrained}. 
That is, 
\begin{equation}
\label{eq:ggsum}
\sum_{A\subseteq L}P\bigl(u_1^A,u_2^A,\ldots,u_n^A\bigr)
=\sum_{\substack{g:K\rightarrow L\\g\text{ bijective}}}
\prod_{i\in K}u_{i,g(i)}\,.
\end{equation}
We proceed to show that the right-hand side of \eqref{eq:ggsum}
is not identically zero if and only if the multilinear monomial
$\prod_{i\in K} x_i$ is properly colored. 

Let us say that a function $h:K\rightarrow S$ that associates
a shade $h(i)\in S$ to each $i\in K$ is {\em valid} if
it holds that $h(i)\in S_{c(i)}$ for all $i\in K$. 
Observe in particular that an {\em injective} valid $h:K\rightarrow S$
exists if and only if $\prod_{i\in K} x_i$ is properly colored. 

We are now ready to start simplifying the right-hand side of \eqref{eq:ggsum}.
Recalling that $u_{i,j}=\sum_{d\in S_{c(i)}}v_{i,d}w_{d,j}$,
expanding the product--sum into a sum--product, and changing the
order of summation, we have
\begin{eqnarray}
\notag
\sum_{\substack{g:K\rightarrow L\\g\text{ bijective}}}
\prod_{i\in K}u_{i,g(i)}
&=&
\sum_{\substack{g:K\rightarrow L\\g\text{ bijective}}}
\prod_{i\in K}\biggl(\sum_{d\in S_{c(i)}}v_{i,d}w_{d,g(i)}\biggr)\\
\notag
&=&
\sum_{\substack{g:K\rightarrow L\\g\text{ bijective}}}
\sum_{\substack{h:K\rightarrow S\\h\text{ valid}}}
\prod_{i\in K}v_{i,h(i)}w_{h(i),g(i)}\\
\label{eq:hsum}
&=&
\sum_{\substack{h:K\rightarrow S\\h\text{ valid}}}
\sum_{\substack{g:K\rightarrow L\\g\text{ bijective}}}
\prod_{i\in K}v_{i,h(i)}w_{h(i),g(i)}\,.
\end{eqnarray}
The outer sum in \eqref{eq:hsum} ranges over all valid functions 
$h:K\rightarrow S$.

Now, let us fix an arbitrary valid $h:K\rightarrow S$. 
We will show that the inner sum in \eqref{eq:hsum} evaluates
to zero in characteristic 2 unless $h$ is injective. 

So suppose that $h$ is not injective. In particular, there exists
at least one pair $b_1,b_2\in K$ with $h(b_1)=h(b_2)$ and $b_1\neq b_2$. 
Let us fix $(b_1,b_2)$ to be the lexicographically minimum such pair.
Consider an arbitrary bijective $g:K\rightarrow L$. 
Define the {\em mate} $g'$ of $g$ by setting 
\[
g'(i)=
\begin{cases}
g(b_2) & \text{if $i=b_1$;}\\
g(b_1) & \text{if $i=b_2$;}\\
g(i)   & \text{otherwise.}
\end{cases}
\]
Since $g$ is bijective, we have $g'\neq g$ and $g''=g$. 
Thus, the set of all bijections $g:K\rightarrow L$ partitions
into disjoint pairs $\{g,g'\}$ with 
\[
\prod_{i\in K}v_{i,h(i)}w_{h(i),g(i)}
=\prod_{i\in K}v_{i,h(i)}w_{h(i),g'(i)}
\,.
\]
Thus, for each valid $h:K\rightarrow S$ that is not injective,
the monomials in the inner sum in \eqref{eq:hsum} have an even coefficient
and hence vanish in characteristic 2. 

So suppose that $h$ is injective. (Recall that such an $h$ 
exists if and only if $K$ defines a properly colored multilinear
monomial.) Let us study the inner sum in \eqref{eq:hsum}.
Fix an arbitrary bijective $g:K\rightarrow L$ and study 
the inner monomial $\prod_{i\in K}v_{i,h(i)}w_{h(i),g(i)}$. 
From the variables $v_{i,d}$ in the monomial we can reconstruct 
the set $K$ and the mapping $h$. Because $h$ is injective, we can 
reconstruct the mapping $g$ from the variables $w_{d,j}$ 
in the monomial by setting $g(h^{-1}(d))=j$ for each relevant
pair $(d,j)$. Since the three-tuple $(K,h,g)$ can be reconstructed
from the inner monomial, no cancellation happens in characteristic~2.

The lemma follows again by linearity. $\qed$

\subsection{Proof of Lemma~\ref{lem:cost-constrained}.}

Let $\pi:S\rightarrow C$ be the mapping that projects each
shade $d\in S_q$ to its underlying color $\pi(d)=q$. 
Imitating the proof of Lemma~\ref{lem:basic}
and expanding \eqref{eq:cost-u} over $i\in K$ as in 
\eqref{eq:hsum}, we obtain cancellation in characteristic 2, 
except possibly for the monomials 
\begin{equation}
\label{eq:ext-hsum}
\sum_{\substack{h:K\rightarrow S}}
\sum_{\substack{g:K\rightarrow L\\g\text{ bijective}}}
\eta^{\sum_{i\in K}\kappa_i(\pi(h(i)))}
\prod_{i\in K}
v_{i,h(i)}w_{h(i),g(i)}
\,.
\end{equation}
Imitating the proof of Lemma~\ref{lem:constrained}, we obtain
further cancellation in characteristic 2 unless the mapping $h$ 
is injective.

So suppose that $h$ is injective. Observe that we can reconstruct
 the three-tuple $(K,h,g)$ from the corresponding monomial 
in \eqref{eq:ext-hsum} exactly as in the proof of 
Lemma~\ref{lem:constrained}, and thus no further cancellation happens 
in characteristic 2. The degree of $\eta$ is clearly the cost of 
the monomial $\prod_{i\in K}x_i$ in its coloring $c=\pi h$. 
In particular, we have that $\prod_{i\in K}x_i$ is properly colored 
in $c$ since $h$ is injective.

The lemma follows again by linearity. $\qed$

\subsection{Remarks}

It is immediate from the proofs that the polynomial $P(\vec x)$ 
may have additional variables $P(\vec x,\vec y)$ without changing 
the conclusion as regards multilinearity and proper coloring 
of the monomials when restricted to the variables $\vec x$. 
Furthermore, any monomial that has total degree less than 
$k$ in the variables $\vec x$ will cancel. 

We observe that Lemma~\ref{lem:cost-constrained} subsumes Lemma~\ref{lem:constrained}. Indeed, given a coloring $c:[n]\rightarrow C$ we can set the costs for Lemma~\ref{lem:cost-constrained} so that $\kappa_i(q)=0$ if $c(i)=q$ and $\kappa_i(q)=1$ otherwise. Then, $P(\vec x)$ has at least one monomial that is both multilinear and properly colored if and only if $Q(\vec v,\vec w,\eta)$ has at least one monomial whose degree in the variable $\eta$ is $\sigma=0$.

\section{An Algorithm for the Maximum Graph Motif Problem}
\label{sect:max-motif}

This section illustrates the use of Lemma~\ref{lem:constrained} 
in a concrete algorithm design for \MaxMotif{}.
In particular, we proceed to give a proof of Theorem~\ref{thm:max-motif}.

Consider an instance $(H,M,C,c,k)$ of \MaxMotif{}. Let us write
$m(q)$ for the number of occurrences of color $q\in C$ in 
the multiset $M$. Also recall that we assume that the host graph
$H$ is connected with $n$ vertices and $e$ edges; in particular, 
$e\geq n-1$. By preprocessing we may assume that $m(q)\leq k$ for 
each $q\in C$.

Our first objective is to arrive at a generating polynomial 
$P_k(\vec x,\vec y)$ that we can use with Lemma~\ref{lem:constrained}. 
There are two key aspects to this quest: (i) the multilinear monomials 
need to reflect the connected vertex sets of size $k$ in $H$, and 
(ii) we must have a fast algorithm for evaluating the polynomial at 
specific points. 

\subsection{Branching walks}

The concept of branching walks was first introduced by 
Nederlof~\cite{nederlof-steiner} to sieve for Steiner trees,
followed by Guillemot and Sikora~\cite{gs10} who 
observed that branching walks can also be employed to span 
connected vertex sets of size $k$ in the host graph $H$. 
Our approach here is to capitalize on this observation and 
span connected sets via branching walks.

Let us write $V=V(H)=\{1,2,\ldots,n\}$ for the vertex set 
and $E=E(H)$ for the edge set of the host graph $H$.
A mapping $\varphi:V(T)\rightarrow V(H)$ is a {\em homomorphism} 
from a graph $T$ to the host $H$ if for all $\{a,b\}\in E(T)$ it holds 
that $\{\varphi(a),\varphi(b)\}\in E(H)$. We adopt the convention of 
calling the elements of $V(T)$ {\em nodes} and the elements 
of $V(H)$ {\em vertices}. 

A {\em branching walk} in $H$ is a pair $W=(T,\varphi)$ where $T$ is 
an ordered rooted tree with node set $V(T)=\{1,2,\ldots,|V(T)|\}$ such 
that every node $a\in V(T)$ coincides with its rank in the preorder 
traversal of $T$, and $\varphi:V(T)\rightarrow V(H)$ is a homomorphism 
from $T$ to $H$. 

Let $W=(T,\varphi)$ be a branching walk in $H$.
The walk {\em starts} from the vertex $\varphi(1)$ in $H$.
The walk {\em spans} the vertices $\varphi(V(T))$ in $H$.
The {\em size} of the walk is $|V(T)|$. 
The walk is {\em simple} if $\varphi$ is injective.
Finally, the walk is {\em properly ordered} if any two sibling 
nodes $a<b$ in $T$ satisfy $\varphi(a)<\varphi(b)$ in $H$.

\subsection{A generating polynomial for branching walks}
\label{sect:genf}

We now define a generating polynomial for properly ordered branching walks 
of size $k$ in $H$. Introduce a variable $x_u$ for each 
vertex $u\in V(H)$ and two variables $y_{(u,v)}$ and $y_{(v,u)}$ 
for each edge $\{u,v\}\in E(H)$.

Let $W=(T,\varphi)$ be a properly ordered branching walk that
starts from $s\in V(H)$ and has size $k$.
Associate with $W$ the {\em monomial fingerprint}
\[
F(W,\vec x,\vec y)=
\prod_{\substack{\{a,b\}\in E(T)\\a<b}}
y_{(\varphi(a),\varphi(b))}x_{\varphi(b)}\,,
\]
where the product is taken over all edges $\{a,b\}\in E(T)$.

Define the generating polynomial $P_{k,s}(\vec x,\vec y)$ as the sum
of the monomial fingerprints of the properly ordered 
branching walks that start from $s$ and have size $k$.
Let $P_k(\vec x,\vec y)=\sum_{s\in V(H)} x_{s} P_{k,s}(\vec x,\vec y)$.
Observe that all monomial in $P_k(\vec x,\vec y)$ have total
degree $2k-1$.

\begin{Lem}
\label{lem:fingerprint}
A monomial in $P_k(\vec x,\vec y)$ is multilinear in the variables 
$\vec x$ if and only if it originates from a monomial fingerprint of 
a simple branching walk. Moreover, such a simple branching 
walk can be reconstructed from its monomial fingerprint.
\end{Lem}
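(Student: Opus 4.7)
The plan is to analyze the $\vec x$-degree contribution of each node of the tree underlying a properly ordered branching walk, and then show that a multilinear monomial in $P_k$ determines its underlying walk uniquely from its variable support.

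I first note that in any branching walk $W=(T,\varphi)$ of size $k$ the tree $T$ has $k$ nodes and $k-1$ edges, and that for each edge $\{a,b\}\in E(T)$ with $a<b$ the smaller endpoint $a$ is the parent of $b$ (this is the standard property of preorder labellings). Consequently the contribution $x_{\varphi(1)}F(W,\vec x,\vec y)$ to $P_k$ carries exactly one $\vec x$-variable per node of $T$: the root contributes through the external factor $x_{\varphi(1)}$, and each non-root node $b$ contributes through the factor $x_{\varphi(b)}$ attached to its parent-to-$b$ edge. Hence the $\vec x$-part of this monomial equals $\prod_{a\in V(T)}x_{\varphi(a)}$, which is multilinear in $\vec x$ if and only if $\varphi$ is injective, i.e., $W$ is simple. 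This settles both directions of the equivalence on a per-walk basis.

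For the reconstruction, suppose a multilinear monomial $m$ occurs in $P_k$. From its $\vec x$-variables I read off a $k$-subset $K\subseteq V(H)$, which equals $\varphi(V(T))$ for every contributing walk. From its $\vec y$-variables I read off a set of directed edges on $K$, which equals $\{(\varphi(a),\varphi(b)):a\text{ parent of }b\text{ in }T\}$; injectivity of $\varphi$ guarantees that each such edge appears as a distinct variable, so no information is lost. Because $T$ has exactly one node (the root) with no parent, the unique vertex in $K$ with in-degree $0$ under these directed edges must be $\varphi(1)=s$, which determines the root.

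Identifying each node of $T$ with its image in $H$ via the now-known injective $\varphi$, the parent-child relation is fully recovered as an unordered rooted tree on $K$. The properly ordered condition then forces the left-to-right order of siblings to agree with the natural ordering of their images in $V(H)\subseteq\mathbb{N}$, and the preorder traversal of this uniquely determined ordered rooted tree assigns the preorder ranks $1,2,\ldots,k$ to $V(T)$. Therefore both $T$ and $\varphi$ are uniquely recovered from $m$; in particular, no two distinct properly ordered simple branching walks share a monomial fingerprint, so the ``if and only if'' holds for the sum $P_k$ itself. The main obstacle I expect is purely bookkeeping: to resist conflating the external $x_{\varphi(1)}$ factor in $P_k$ with the $\vec x$-factors inside $F(W,\vec x,\vec y)$, and to argue cleanly that the properly ordered condition together with injectivity forces a unique preorder labeling of $V(T)$.
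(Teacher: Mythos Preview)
Your argument is correct and follows essentially the same route as the paper's own proof: identify the $\vec x$-degree of $x_u$ with the number of preimages of $u$ under $\varphi$, then reconstruct the rooted tree from the oriented edges encoded by the $\vec y$-variables, fix the sibling order via the properly ordered condition, and recover the preorder labels by traversal. Your write-up is in fact slightly more explicit than the paper's in two places---you spell out that $a<b$ forces $a$ to be the parent in a preorder labelling, and you note that uniqueness of the fingerprint is what rules out cancellation in the sum $P_k$---but these are refinements of the same argument rather than a different approach.
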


\begin{proof}
For the first claim it suffices to consider an arbitrary 
monomial of $P_k(\vec x,\vec y)$ and observe that 
the degree of the variable $x_u$ indicates how many times $u\in V(H)$ 
occurs in the image of $\varphi$. In particular, $\varphi$ is injective 
if and only if the monomial is multilinear in the variables $\vec x$.

For the second claim, 
let $W=(T,\varphi)$ be a simple and properly ordered branching
walk that starts from $s$. We must reconstruct $W$ from its monomial 
fingerprint that has been multiplied by $x_s$.
Since $\varphi$ is injective, we can immediately reconstruct 
(up to labels of the vertices) the rooted tree structure of $T$ because 
the degrees 
of the variables $y_{(u,v)}$ in the monomial (if any) reveal both 
the edges and the orientation of each edge in $T$. Since $W$ is 
properly ordered, we can reconstruct (up to labels of the vertices) 
the ordering of $T$. Finally, we can reconstruct the vertex labels 
of $T$ by carrying out a preorder traversal of $T$.
\end{proof}

An immediate corollary of Lemma~\ref{lem:fingerprint} is that
$(H,M,C,c,k)$ is a YES-instance of \MaxMotif{} if and only
if $P_k(\vec x,\vec y)$ has a monomial that is both properly colored and 
multilinear in the sense of Lemma~\ref{lem:constrained}. 
Indeed, a multilinear monomial corresponds to a simple branching walk,
which by definition spans a connected set of vertices. Conversely, every
connected set of vertices admits at least one simple branching walk.
Thus, to complete the proof of Theorem~\ref{thm:max-motif} it remains 
to derive a fast way to evaluate the polynomial $P_k(\vec x,\vec y)$ and 
then apply Lemma~\ref{lem:constrained} to obtain an algorihtm design.

\subsection{Evaluating the generating polynomial}

This section develops a dynamic programming recurrence to 
evaluate the polynomial $P_{k}(\vec x,\vec y)$ at a given 
assignment of values to the variables $\vec x,\vec y$. 

For a vertex $u\in V(H)$, denote the ordered sequence of neighbors 
of $u$ in $H$ by $u_1<u_2<\cdots<u_{\deg_H(u)}$.

For each $u\in V(H)$, $1\leq i\leq \deg_H(u)+1$, 
and $0\leq\ell\leq k$, denote by $\mathcal{W}(\ell,u,i)$ the set of 
properly ordered branching walks $W=(T,\varphi)$ such that 
(i) the size of $W$ is $\ell$,
(ii) $W$ starts from $u$, and
(iii) for any child node $a$ of $1$ in $T$ it 
holds that $\varphi(a)=u_j$ implies $j\ge i$.
Define the associated generating polynomial over the variables 
$\vec x,\vec y$ by
\[
P_{\ell,u,i}(\vec x,\vec y)=
\sum_{(T,\varphi)\in\mathcal{W}(u,1,\ell)}
\prod_{\substack{\{a,b\}\in E(T)\\a<b}}
y_{(\varphi(a),\varphi(b))}x_{\varphi(b)}\,.
\]
It is immediate from the definition that $P_{\ell,u}(\vec x,\vec y)=P_{\ell,u,1}(\vec x,\vec y)$.

The functions $P_{\ell,u,i}(\vec x,\vec y)$ admit the following recurrence.
The base case occurs for $\ell=1$ or $i=\deg_H(u)+1$, in which
case we have
\begin{equation}
\label{eq:dp1}
P_{\ell,u,i}(\vec x,\vec y)=
\begin{cases}
1 & \text{if $\ell=1$,}\\
0 & \text{otherwise}.
\end{cases}
\end{equation}
For $2\leq \ell \leq k$ and $1\leq i\leq \deg_H(u)$, we have
\begin{eqnarray}
\nonumber
 P_{\ell,u,i}(\vec x,\vec y) & = & P_{\ell,u,i+1}(\vec x,\vec y) + \\
 \label{eq:dp2}
  && y_{(u,u_i)}x_{u_i}\sum_{\substack{\ell_1+\ell_2=\ell\\\ell_1,\ell_2\ge 1}} P_{\ell_1,u,i+1}(\vec x,\vec y)\cdot P_{\ell_2,u_i,1}(\vec x,\vec y)\,.
\end{eqnarray}
To see that the recurrence is correct, observe that 
the two lines above in \eqref{eq:dp2} correspond to a partitioning
of the properly ordered branching walks in $\mathcal{W}(\ell,u,i)$ 
into two disjoint classes where either 
(i) there is no child node $a$ of $1$ in $T$ such that $h(a)=u_i$ 
or 
(ii) there is a unique such child. 
(At most one such child may exist because the branching walk 
is properly ordered.)

Thus, we can evaluate the polynomial $P_{k}(\vec x,\vec y)$ 
via \eqref{eq:dp1}, \eqref{eq:dp2}, and 
\begin{equation}
\label{eq:dp-top}
P_{k}(\vec x,\vec y)=\sum_{u\in V(H)}x_uP_{k,u,1}(\vec x,\vec y)\,.
\end{equation}

\subsection{The algorithm}

We are now ready to describe the algorithm for Theorem~\ref{thm:max-motif}.
Assume an instance $(H,M,C,c,k)$ of the \MaxMotif{} has been given as input. 

Let $b=\lceil \log_2 6k\rceil$ and consider the finite field 
$\mathbb{F}_{2^b}$ of order $2^b$.
Introduce variables $v_{i,d}$ and $w_{d,j}$ as in the setup
of Lemma~\ref{lem:constrained}. Assign a value from $\mathbb{F}_{2^b}$
uniformly and independently at random to each of these variables. 
Similarly, as in the setup of \S\ref{sect:genf}, introduce two variables
$y_{(r,s)}$ and $y_{(s,r)}$  to each edge $\{r,s\}\in E(H)$ and assign 
a value to each variable uniformly and independently at random from 
$\mathbb{F}_{2^b}$.
We thus have three vectors of values in $\mathbb{F}_{2^b}$, namely 
$\vec v$, $\vec w$, and $\vec y$.

Using the recurrence given by \eqref{eq:dp1}, \eqref{eq:dp2}, 
and \eqref{eq:dp-top} for each $A\subseteq L$ in turn, 
compute the value 
\begin{equation}
\label{eq:q-motif}
Q(\vec v,\vec w,\vec y)=
\sum_{A\subseteq L}P_k\bigl(\vec u^A(\vec v,\vec w),\vec y\bigr)\,,
\end{equation}
where the values $\vec u^A(\vec v,\vec w)=(u_1^A(\vec v,\vec w),u_2^A(\vec v,\vec w),\ldots,u_n^A(\vec v,\vec w))$ are determined from the set $A$ and the 
values $\vec v$ and $\vec w$ as in Lemma~\ref{lem:constrained}. 
If $Q(\vec v,\vec w,\vec y)$ is nonzero in $\mathbb{F}_{2^b}$, 
output YES; otherwise output NO. This completes the description of 
the algorithm.

\subsection{Running time}
\label{sect:max-motif-runtime}

To analyse the running time of the algorithm, observe that we can assume 
that $m(q)\leq k$. Thus, computing the values $\vec u^A(\vec v,\vec w)$ for 
a fixed $A\subseteq L$ takes $O(k^2n)$ arithmetic operations in 
$\mathbb{F}_{2^b}$, and each such operation can be implemented to 
run in time $\mu=O(b\log b\log\log b)$~\cite{algebraic-complexity-theory}.
Furthermore, each evaluation of \eqref{eq:dp1}, \eqref{eq:dp2}, 
and \eqref{eq:dp-top} for a fixed $A$ takes $O(k^2e)$ 
arithmetic operations in $\mathbb{F}_{2^b}$. Hence, recalling that
$e\geq n-1$, the total running time of the algorithm is $O(2^kk^2e\mu)$.

\subsection{Correctness}

To establish the desired properties of the algorithm, observe that 
from \S\ref{sect:genf} and Lemma~\ref{lem:constrained} it follows 
that \eqref{eq:q-motif}%
---viewed as a polynomial in the variables $\vec v$, $\vec w$, 
and $\vec y$---is not identically zero if and only if $(H,M,C,c,k)$ 
is a YES-instance of \MaxMotif{}. 
Thus, if $(H,M,C,c,k)$ is a NO-instance, 
then \eqref{eq:q-motif} evaluates to zero and the algorithm gives 
a NO output. Furthermore, if $(H,M,C,c,k)$ is a YES-instance, 
then \eqref{eq:q-motif} is an evaluation of a nonzero multivariate 
polynomial of total degree $3k-1$ at a point $(\vec v,\vec w,\vec y)$ 
selected uniformly at random. 
Recalling that $2^b\geq 6k$, the following lemma thus implies 
that the value $Q(\vec v,\vec w,\vec y)$ is nonzero (and hence 
the algorithm outputs YES) with probability at least 1/2.

\begin{Lem}[\cite{DeMilloLipton1978,schwartz,zippel}]
\label{lem:dmlsz}
A nonzero polynomial $P(z_1,z_2,\ldots,z_\ell)$ of 
total degree $d$ with coefficients in the finite field $\mathbb{F}_q$
has at most $dq^{\ell-1}$ roots in $\mathbb{F}_q^\ell$. 
\end{Lem}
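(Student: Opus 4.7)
The plan is to prove the bound by induction on the number of variables $\ell$, which is the classical route to the Schwartz--Zippel--DeMillo--Lipton inequality. The base case $\ell=1$ is the familiar fact that a nonzero univariate polynomial of degree $d$ over a field has at most $d$ roots, which matches the bound $d\cdot q^{0}=d$.

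For the inductive step, I would single out the last variable and write
\begin{equation*}
P(z_1,\ldots,z_\ell)=\sum_{i=0}^{k} z_\ell^{\,i}\,P_i(z_1,\ldots,z_{\ell-1}),
\end{equation*}
where $k$ is the largest index with $P_i$ not identically zero. Since $P$ has total degree $d$, the coefficient polynomial $P_k$ is nonzero and has total degree at most $d-k$. I would then classify each root $(a_1,\ldots,a_\ell)\in\mathbb{F}_q^\ell$ of $P$ according to whether $P_k(a_1,\ldots,a_{\ell-1})$ vanishes. In the first case, the inductive hypothesis applied to $P_k$ yields at most $(d-k)q^{\ell-2}$ choices for $(a_1,\ldots,a_{\ell-1})$, each of which can be completed by any of the $q$ possible values of $a_\ell$, contributing at most $(d-k)q^{\ell-1}$ roots. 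In the second case, for each of the at most $q^{\ell-1}$ tuples $(a_1,\ldots,a_{\ell-1})$ with $P_k(a_1,\ldots,a_{\ell-1})\neq 0$, the univariate polynomial $P(a_1,\ldots,a_{\ell-1},z_\ell)$ has degree exactly $k$, hence at most $k$ roots in $z_\ell$, contributing at most $k\,q^{\ell-1}$ roots in total. Summing the two contributions gives $(d-k)q^{\ell-1}+k\,q^{\ell-1}=d\,q^{\ell-1}$, as required.

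There is no real obstacle here: the only subtle point is the careful choice of $k$ as the \emph{largest} index with $P_k\not\equiv 0$, which both ensures $\deg P_k\le d-k$ (so that the inductive hypothesis is applicable and gives the right coefficient $d-k$) and guarantees that the univariate restriction in the second case really has degree exactly $k$. With this choice, the two partial bounds combine cleanly into the claimed figure $d\,q^{\ell-1}$, and the induction closes.
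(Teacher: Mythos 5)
Your proof is correct: this is the standard induction on the number of variables that establishes the DeMillo--Lipton--Schwartz--Zippel bound, and you have handled the one delicate point (choosing $k$ maximal so that $P_k\not\equiv 0$, $\deg P_k\le d-k$, and the restricted univariate polynomial has degree exactly $k$). Note that the paper does not prove this lemma at all --- it is quoted from the literature \cite{DeMilloLipton1978,schwartz,zippel} --- so there is no in-paper argument to compare against; your write-up is simply the classical proof of the cited result.
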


This completes the proof of Theorem~\ref{thm:max-motif}.$\qed$

\subsection{Minor variants and extensions}

The basic framework presented above immediately allows for some 
minor variants and extensions, such as seeking an exact match instead
of the maximum match by setting $|M|=k$. Similarly, one may extend
from a fixed coloring $c:V(H)\rightarrow C$ into a {\em list coloring} 
version where each vertex $i\in V(H)$ gets associated a list 
$C(i)\subseteq C$ of valid colors instead of a single color $c(i)$, and 
the motif $M$ may match against any one of the colors in the list. 
This variant can be implemented by simply changing the inner sum in 
Lemma~\ref{lem:constrained} to $u_{i,j}=\sum_{d\in \cup_{q\in C(i)}S_q}v_{i,d}w_{d,j}$. That is, we sum over the shades of all the colors $q$ in $C(i)$.

\section{An Algorithm for the Closest Graph Motif Problem}
\label{sect:closest-motif}

This section gives a proof of Theorem~\ref{thm:closest-motif} using
Lemma~\ref{lem:cost-constrained} and the generating function 
developed in \S\ref{sect:genf}. 

Consider an instance $(H,M,C_0,c,\sigma_{\mathrm{S}},\sigma_{\mathrm{I}},\sigma_{\mathrm{D}},\tau,k)$ of \ClosestMotif{} with $V(H)=\{1,2,\ldots,n\}$. Let us again write $m(q)$ for the number of occurrences of color $q\in C_0$ in the multiset $M$. We may assume that $m(q)\leq k$. Furthermore, since $H$ is connected, the number of vertices $n$ and the number of edges $e$ satisfy $e\geq n-1$.

The key step in arriving at Theorem~\ref{thm:closest-motif} is to 
transport weighted edit distance into the setting of 
Lemma~\ref{lem:cost-constrained}.

\subsection{Optimum edit sequences}

It will be convenient to have available the following lemma
that characterizes the structure of a sequence of operations
that realizes the minimum cost to transform a multiset $M$ to 
the multiset $N$, where both multisets are over $C_0$.

Let $k=|N|$. Consider an arbitrary sequence of basic operations 
that transforms $M$ to $N$. As the sequence is executed, 
each original element of $M$ gets assigned into one of three classes.
First, there are $k_{\mathrm{U}}$ elements in $M$ that remain
untouched (and hence in $N$) when the execution terminates.
Second, there are $k_{\mathrm{S}}$ elements in $M$ that undergo
at least one substitution---which we may view as ``recoloring'' 
of the element---and remain in $N$ when the execution terminates.
Third, the remaining $|M|-k_{\mathrm{U}}-k_{\mathrm{S}}$ elements 
of $M$ get deleted during execution. Thus, at least 
$k-k_{\mathrm{U}}-k_{\mathrm{S}}$ insertions must occur in 
the sequence. Let us call the values $k_{\mathrm{U}}$ and $k_{\mathrm{S}}$
the {\em parameters} of the sequence. 

\begin{Lem}
\label{lem:optimum-parameterized-sequence}
Let there exist at least one sequence with parameters 
$k_{\mathrm{U}}$ and $k_{\mathrm{S}}$ that transforms 
$M$ into $N$. Then, the cost of this sequence is at least
\begin{equation}
\label{eq:optimum-parameterized-sequence}
\sigma_{\mathrm{S}}k_{\mathrm{S}}
+\sigma_{\mathrm{D}}\bigl(|M|-k_{\mathrm{U}}-k_{\mathrm{S}}\bigr)
+\sigma_{\mathrm{I}}\bigl(k-k_{\mathrm{U}}-k_{\mathrm{S}}\bigr),
\end{equation}
with equality for at least one sequence that transforms $M$ into $N$. 
\end{Lem}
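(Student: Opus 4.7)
My plan is to prove the lemma in two halves: first the lower bound \eqref{eq:optimum-parameterized-sequence} on the cost of any sequence with the stated parameters, and then the existence of a matching sequence that achieves this bound with equality.

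For the lower bound, I would simply charge the basic operations of an arbitrary sequence with parameters $k_{\mathrm{U}}, k_{\mathrm{S}}$ to three disjoint sources. By the definition of $k_{\mathrm{S}}$, there are $k_{\mathrm{S}}$ original elements of $M$ each of which undergoes at least one (S)-operation before the execution terminates, contributing total cost at least $\sigma_{\mathrm{S}} k_{\mathrm{S}}$. The $|M|-k_{\mathrm{U}}-k_{\mathrm{S}}$ elements of $M$ that do not survive must each be removed by at least one (D)-operation, contributing at least $\sigma_{\mathrm{D}}(|M|-k_{\mathrm{U}}-k_{\mathrm{S}})$. Finally, of the $k=|N|$ elements in the final multiset $N$, only the $k_{\mathrm{U}}+k_{\mathrm{S}}$ that descend from surviving elements of $M$ are present without being inserted; the remaining $k-k_{\mathrm{U}}-k_{\mathrm{S}}$ must be brought in by (I)-operations, contributing at least $\sigma_{\mathrm{I}}(k-k_{\mathrm{U}}-k_{\mathrm{S}})$. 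These three contributions are over disjoint operations of the sequence, so summing gives the bound.

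For equality, I would construct a canonical sequence achieving \eqref{eq:optimum-parameterized-sequence} by reading off the ``fate'' of each element from any witnessing sequence. Tracing each element of $M$ through the given sequence identifies partitions $M=M_{\mathrm{U}}\uplus M_{\mathrm{S}}\uplus M_{\mathrm{D}}$ and $N=N_{\mathrm{U}}\uplus N_{\mathrm{S}}\uplus N_{\mathrm{I}}$ of the prescribed sizes, together with a bijection between $M_{\mathrm{U}}$ and $N_{\mathrm{U}}$ preserving colors (since those elements are untouched) and an arbitrary bijection between $M_{\mathrm{S}}$ and $N_{\mathrm{S}}$. From this data I produce a new sequence that performs, for each $M_{\mathrm{S}}$-to-$N_{\mathrm{S}}$ pair, exactly one substitution directly to the target color; then one deletion per element of $M_{\mathrm{D}}$; then one insertion per element of $N_{\mathrm{I}}$. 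This sequence transforms $M$ into $N$ and its cost is exactly \eqref{eq:optimum-parameterized-sequence}.

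The only subtle point is justifying that the partitions and bijections actually exist: I have to argue that the parameters $k_{\mathrm{U}}, k_{\mathrm{S}}$, defined operationally through a specific execution, faithfully determine color-preserving pairings between submultisets of $M$ and $N$. I would handle this by observing that the class of each element of $M$ (untouched, substituted-survivor, deleted) is an invariant of that element's trajectory in the witnessing sequence, and each surviving element has a well-defined final color in $N$, so the induced pairings between $M_{\mathrm{U}}$ and $N_{\mathrm{U}}$ (identity-on-colors) and between $M_{\mathrm{S}}$ and $N_{\mathrm{S}}$ are well defined. No further combinatorial ingredient is needed.
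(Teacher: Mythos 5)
Your proof is correct and follows essentially the same route as the paper: the lower bound by charging disjoint operations to the substituted, deleted, and inserted elements, and equality via the canonical sequence that performs exactly one substitution per recolored survivor, one deletion per non-survivor, and one insertion per remaining element of $N$. The extra care you take about the induced pairings between submultisets of $M$ and $N$ is sound but is left implicit in the paper's (very terse) argument.
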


\begin{proof}
The inequality is immediate from the preceding analysis;
the sequence that meets equality (i) does nothing
for the $k_{\mathrm{U}}$ untouched original elements, 
(ii) substitutes the correct final color with one substitution 
for each of the $k_{\mathrm{S}}$ originals, (iii) deletes 
each of the $|M|-k_{\mathrm{U}}-k_{\mathrm{S}}$ remaining originals, and
(iv) finally inserts $k-k_{\mathrm{U}}-k_{\mathrm{S}}$ new elements 
to match with $N$. 
\end{proof}

Lemma~\ref{lem:optimum-parameterized-sequence} reveals a useful
symmetry between insertions and deletions in an optimum sequence;
that is, if we let $k_{\mathrm{ID}}=k-k_{\mathrm{U}}-k_{\mathrm{S}}$,
then \eqref{eq:optimum-parameterized-sequence} is equal to
\begin{equation}
\label{eq:total-parameterized-cost}
\sigma_{\mathrm{S}}k_{\mathrm{S}}
+\bigl(\sigma_{\mathrm{I}}+\sigma_{\mathrm{D}}\bigr)k_{\mathrm{ID}}
+\sigma_{\mathrm{D}}\bigl(|M|-k\bigr)\,.
\end{equation}
Thus it suffices to optimize over $k$-multisets of colors while
tracking the parameters $k_{\mathrm{S}}$ and $k_{\mathrm{ID}}$ 
to arrive at the optimum. This strategy will be employed in our 
algorithm. 

\subsection{The algorithm}
Assume an instance $(H,M,C_0,c,\sigma_{\mathrm{S}},\sigma_{\mathrm{I}},\sigma_{\mathrm{D}},\tau,k)$ of \ClosestMotif{} has been given as input. 

Let us first set up the application of Lemma~\ref{lem:cost-constrained}. 
Introduce a new color ``$*$'' and let $C=C_0\cup\{*\}$ with $m(*)=k$. 
As already highlighted in the remarks to Lemma~\ref{lem:cost-constrained},
instead of one indeterminate $\eta$, we will work with two indeterminates 
$\eta_{\mathrm{S}}$ and $\eta_{\mathrm{ID}}$ in 
Lemma~\ref{lem:cost-constrained} to simultaneously track the 
$\mathrm{S}$-cost and the $\mathrm{ID}$-cost.
For $i\in [n]$ and $q\in C$, define the cost functions
\begin{equation}
\label{eq:s-cost}
\kappa_i^{\mathrm{S}}(q)=
\begin{cases}
0 & \text{if $q=c(i)$;}\\
1 & \text{if $q\neq c(i)$ and $q\in C_0$;}\\
0 & \text{if $q=*$}
\end{cases}
\end{equation}
and
\begin{equation}
\label{eq:id-cost}
\kappa_i^{\mathrm{ID}}(q)=
\begin{cases}
0 & \text{if $q=c(i)$;}\\
0 & \text{if $q\neq c(i)$ and $q\in C_0$;}\\
1 & \text{if $q=*$.}
\end{cases}
\end{equation}
The intuition underlying \eqref{eq:s-cost} and \eqref{eq:id-cost} is as
follows. Coloring a vertex $i$ with color $q\notin\{c(i),*\}$ corresponds
to substitution of a copy of $q$ in $M$ by a copy of $c(i)$. 
Coloring $i$ with color ``$*$'' corresponds to inserting a copy of
$c(i)$ to $M$. 

The algorithm now proceeds as follows.
Let $b=\lceil \log_2 6k\rceil$ and consider the finite field 
$\mathbb{F}_{2^b}$ of order $2^b$.
Introduce variables $v_{i,d}$ and $w_{d,j}$ as in the setup
of Lemma~\ref{lem:cost-constrained}. Assign a value from $\mathbb{F}_{2^b}$
uniformly and independently at random to each of these variables. 
Similarly, as in the setup of \S\ref{sect:genf}, introduce two variables
$y_{(r,s)}$ and $y_{(s,r)}$  to each edge $\{r,s\}\in E(H)$ and assign 
a value to each variable uniformly and independently at random from 
$\mathbb{F}_{2^b}$.
We thus have three vectors of values in $\mathbb{F}_{2^b}$, namely 
$\vec v$, $\vec w$, and $\vec y$.

The main part of the algorithm consists of two outer loops that cycle 
through $k+1$ distinct values 
in $\mathbb{F}_{2^b}$ to each of the variables $\eta_{\mathrm{S}}$ and 
$\eta_{\mathrm{ID}}$. For each pair of values
$(\eta_{\mathrm{S}},\eta_{\mathrm{ID}})$ in $\mathbb{F}_{2^b}$,
we use the recurrence given by \eqref{eq:dp1}, \eqref{eq:dp2}, 
and \eqref{eq:dp-top} for each $A\subseteq L$ in turn, 
and compute the value
\begin{equation}
\label{eq:q-closest-motif}
Q(\vec v,\vec w,\vec y,\eta_{\mathrm{S}},\eta_{\mathrm{ID}})=
\sum_{A\subseteq L}P_k\bigl(\vec u^A(\vec v,\vec w,\eta_{\mathrm{S}},\eta_{\mathrm{ID}}),\vec y\bigr)\,,
\end{equation}
where the values 
\[
\vec u^A(\vec v,\vec w,\eta_{\mathrm{S}},\eta_{\mathrm{ID}})=
(u_1^A(\vec v,\vec w,\eta_{\mathrm{S}},\eta_{\mathrm{ID}}),u_2^A(\vec v,\vec w,\eta_{\mathrm{S}},\eta_{\mathrm{ID}}),\ldots,u_n^A(\vec v,\vec w,\eta_{\mathrm{S}},\eta_{\mathrm{ID}})) 
\]
are determined from the set $A$ and the values $\vec v$ and $\vec w$ as 
in Lemma~\ref{lem:cost-constrained}, but with \eqref{eq:cost-u} replaced by
\begin{equation}
\label{eq:bivar-cost-u}
u_{i,j}=\sum_{q\in C}\eta_{\mathrm{S}}^{\kappa_i^{\mathrm{S}}(q)}\eta_{\mathrm{ID}}^{\kappa_i^{\mathrm{ID}}(q)}
\sum_{d\in S_q}v_{i,d}w_{d,j}
\end{equation}
for all $i\in[n]$, $j\in L$, and $A\subseteq L$.
When the main part terminates, we have available 
$(k+1)^2$ evaluations of \eqref{eq:q-closest-motif} 
at points $(\eta_{\mathrm{S}},\eta_{\mathrm{ID}})$.

By Lagrange interpolation, we recover \eqref{eq:q-closest-motif}
as a bivariate polynomial of total degree at most $k$ 
in the indeterminates $\eta_{\mathrm{S}}$ and $\eta_{\mathrm{ID}}$.
If this bivariate polynomial has at least one monomial 
$\eta_{\mathrm{S}}^{k_{\mathrm{S}}}
 \eta_{\mathrm{ID}}^{k_{\mathrm{ID}}}$
such that the degrees 
$k_{\mathrm{S}}$ and $k_{\mathrm{ID}}$ satisfy
\begin{equation}
\label{eq:cost-comparison}
\sigma_{\mathrm{S}}k_{\mathrm{S}}
+\bigl(\sigma_{\mathrm{I}}+\sigma_{\mathrm{D}}\bigr)k_{\mathrm{ID}}
+\sigma_{\mathrm{D}}\bigl(|M|-k\bigr)\leq \tau\,,
\end{equation}
then the algorithm outputs YES; otherwise the algorithm outputs NO. 
This completes the description of the algorithm.

\subsection{Running time}
The analysis is essentially similar to \S\ref{sect:max-motif-runtime}, 
with two differences. First, the outer loop in the main part introduces
a multiplicative factor $k^2$ compared with \S\ref{sect:max-motif-runtime}.
Second, the implementation of \eqref{eq:bivar-cost-u} requires us to 
sum over all the shades originating from $M$ and the $k$ shades
of the color ``$*$''. This can be done efficiently by precomputing 
the inner sums $\sum_{d\in S_q}v_{i,d}w_{d,j}$ for each color 
$q\in C$, index $i\in [n]$, and label $j\in L$, which takes 
$O\bigl((|M|+k)kn\mu\bigr)$ time outside the main loops. In 
the outer loop of the main part it thus suffices to compute only 
the outer sum in \eqref{eq:bivar-cost-u} for each choice of 
$(\eta_{\mathrm{S}},\eta_{\mathrm{ID}})$, which leads to 
$O\bigl(|C_0|kn\mu\bigr)$ time for each iteration of the outer
loop. In the inner loop over $A\subseteq L$, it takes 
$O(kn\mu)$ time to prepare the vector 
$\vec u^A(\vec v,\vec w,\eta_{\mathrm{S}},\eta_{\mathrm{ID}})$.
Compared with \S\ref{sect:max-motif-runtime},
this gives a further contributing factor of $|C_0|k$ outside the
inner loop.
(The running time cost of the final interpolation step and the checking of 
the at most $k^2$ monomials of the bivariate polynomial 
$Q(\vec v,\vec w,\vec y,\eta_{\mathrm{S}},\eta_{\mathrm{ID}})$ 
with respect to \eqref{eq:cost-comparison} is assumed
to be subsumed by the running time bound.)

\subsection{Correctness}

We start by observing that \eqref{eq:s-cost} and \eqref{eq:id-cost} imply 
that \eqref{eq:q-closest-motif} has total degree at most $k$ in
the variables $\eta_{\mathrm{S}}$ and $\eta_{\mathrm{ID}}$,
thus implying that Lagrange interpolation will correctly
recover the polynomial in $\eta_{\mathrm{S}}$ and $\eta_{\mathrm{ID}}$
from the $(k+1)^2$ evaluations computed in the main loop.

Let us say that \eqref{eq:q-closest-motif}%
---viewed as a polynomial in all the variables $\vec v$, $\vec w$, 
$\vec y$, $\eta_{\mathrm{S}}$, $\eta_{\mathrm{ID}}$---is
{\em witnessing} if there exists at least one monomial 
whose degrees $k_{\mathrm{S}}$ and $k_{\mathrm{ID}}$ satisfy
\eqref{eq:cost-comparison}. 

\begin{Lem}
The polynomial \eqref{eq:q-closest-motif} is witnessing
if and only if the given input is a YES-instance of \ClosestMotif{}. 
\end{Lem}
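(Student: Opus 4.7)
My plan is to chain together three facts: first, Lemma~\ref{lem:fingerprint}, by which multilinear-in-$\vec x$ monomials of $P_k(\vec x, \vec y)$ correspond bijectively to simple branching walks in $H$, and hence to connected vertex subsets $K \subseteq V(H)$ of size $k$; second, the bivariate extension of Lemma~\ref{lem:cost-constrained} announced in the remark after that lemma and implemented by \eqref{eq:bivar-cost-u}, which guarantees that \eqref{eq:q-closest-motif} has a monomial with $\eta_{\mathrm{S}}$-degree $k_{\mathrm{S}}$ and $\eta_{\mathrm{ID}}$-degree $k_{\mathrm{ID}}$ if and only if some connected $K$ of size $k$ admits a proper coloring $c' : K \to C$ whose $\mathrm{S}$-cost $\sum_{i \in K} \kappa_i^{\mathrm{S}}(c'(i))$ equals $k_{\mathrm{S}}$ and whose $\mathrm{ID}$-cost $\sum_{i \in K} \kappa_i^{\mathrm{ID}}(c'(i))$ equals $k_{\mathrm{ID}}$; and third, Lemma~\ref{lem:optimum-parameterized-sequence}, which evaluates the weighted edit distance via \eqref{eq:total-parameterized-cost}.

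The central combinatorial step I would record separately is a dictionary between proper colorings $c' : K \to C$ and parameterized edit sequences transforming $M$ into $c(K)$: a vertex $i \in K$ colored $c'(i) = c(i) \in C_0$ is untouched, a vertex with $c'(i) = q \in C_0 \setminus \{c(i)\}$ encodes a substitution of a copy of $q \in M$ by $c(i)$, and a vertex with $c'(i) = *$ encodes insertion of $c(i)$. The multiplicity bounds $m(q)$ for $q \in C_0$ exactly enforce that at most $m(q)$ copies of $q$ in $M$ are touched, the rest being deleted, so the total cost equals $\sigma_{\mathrm{S}} k_{\mathrm{S}} + \sigma_{\mathrm{I}} k_{\mathrm{ID}} + \sigma_{\mathrm{D}}(|M| - k + k_{\mathrm{ID}})$, which is exactly \eqref{eq:total-parameterized-cost}; the slack $m(*) = k$ is never binding. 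This dictionary is a bijection in the sense that every optimum parameterized sequence in Lemma~\ref{lem:optimum-parameterized-sequence} arises from some proper coloring of $K$, and conversely.

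Given the dictionary, both directions fall out cleanly. If the input is a YES-instance with witness $K$, then Lemma~\ref{lem:optimum-parameterized-sequence} supplies parameters $(k_{\mathrm{U}}, k_{\mathrm{S}})$ of cost at most $\tau$; the dictionary produces a proper coloring $c'$ of $K$ with $\mathrm{S}$-cost $k_{\mathrm{S}}$ and $\mathrm{ID}$-cost $k_{\mathrm{ID}} = k - k_{\mathrm{U}} - k_{\mathrm{S}}$, and feeding $K$ and $c'$ through the bivariate Lemma~\ref{lem:cost-constrained} delivers a monomial of $Q$ whose degrees verify \eqref{eq:cost-comparison}. Conversely, a witnessing monomial extracts via Lemma~\ref{lem:cost-constrained} a connected $K$ of size $k$ together with a proper coloring $c'$ of the prescribed $\mathrm{S}$- and $\mathrm{ID}$-costs, which via the dictionary yields an edit sequence from $M$ to $c(K)$ whose cost is exactly the left-hand side of \eqref{eq:cost-comparison} and hence at most $\tau$, certifying a YES-instance.

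The main obstacle is rigorously verifying the bivariate extension of Lemma~\ref{lem:cost-constrained}, which the paper only sketches in a remark. The argument is a mechanical adaptation of the single-indeterminate proof: in the analogue of \eqref{eq:ext-hsum} the scalar $\eta^{\sum_i \kappa_i(\pi(h(i)))}$ is replaced by the product $\eta_{\mathrm{S}}^{\sum_i \kappa_i^{\mathrm{S}}(\pi(h(i)))} \eta_{\mathrm{ID}}^{\sum_i \kappa_i^{\mathrm{ID}}(\pi(h(i)))}$, but the characteristic-$2$ swap that kills non-injective $h$ is unaffected since the two $\eta$-exponents coincide under the swap, and the triple $(K, h, g)$ is still reconstructible from a surviving monomial so that its bivariate $\eta$-degree is well-defined and equals $(k_{\mathrm{S}}, k_{\mathrm{ID}})$. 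Once this extension is in hand, the rest is essentially a notational recombination of the three cited lemmas.
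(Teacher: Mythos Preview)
Your proposal is correct and follows essentially the same approach as the paper's proof: both directions are argued by combining Lemma~\ref{lem:fingerprint} (multilinear monomials $\leftrightarrow$ simple branching walks $\to$ connected $K$), the bivariate form of Lemma~\ref{lem:cost-constrained} (monomials of $Q$ with prescribed $\eta$-degrees $\leftrightarrow$ proper colorings of $K$ with prescribed $\mathrm{S}$- and $\mathrm{ID}$-costs), and Lemma~\ref{lem:optimum-parameterized-sequence} via \eqref{eq:total-parameterized-cost}; your ``dictionary'' between proper colorings and parameterized edit sequences is exactly the coloring construction the paper spells out in its ``if'' direction. Your extra care in spelling out why the bivariate extension of Lemma~\ref{lem:cost-constrained} goes through is a welcome addition that the paper leaves to the reader.
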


\begin{proof}
In the ``only if'' direction, consider a monomial of 
\eqref{eq:q-closest-motif} whose degrees 
$k_{\mathrm{S}}$ and $k_{\mathrm{ID}}$ satisfy \eqref{eq:cost-comparison}. 
From Lemma~\ref{lem:cost-constrained}
we have that the polynomial $P_k(\vec x,\vec y)$ has
at least one monomial that is both multilinear in $\vec x$
and admits a proper coloring with $\mathrm{S}$-cost 
$k_{\mathrm{S}}$ and $\mathrm{ID}$-cost $k_{\mathrm{ID}}$.
From \S\ref{sect:genf} it follows that this monomial of 
$P_k(\vec x,\vec y)$ corresponds to a simple branching walk
in $H$ and thus identifies a connected set $K\subseteq V(H)$ 
of vertices in $H$. Furthermore, the existence
of a proper coloring of the monomial implies by 
\eqref{eq:s-cost}, \eqref{eq:id-cost}, and 
Lemma~\ref{lem:optimum-parameterized-sequence} 
that there exists a sequence of basic operations that
transforms the multiset $M$ to the multiset $c(K)$
with total cost \eqref{eq:total-parameterized-cost}.
In particular, since $k_{\mathrm{S}}$ and $k_{\mathrm{ID}}$ satisfy
\eqref{eq:cost-comparison}, we have that 
$(H,M,C_0,c,\sigma_{\mathrm{S}},\sigma_{\mathrm{I}},\sigma_{\mathrm{D}},\tau,k)$
is a YES-instance of \ClosestMotif{}. 

In the ``if'' direction, let 
$(H,M,C_0,c,\sigma_{\mathrm{S}},\sigma_{\mathrm{I}},\sigma_{\mathrm{D}},\tau,k)$
be a YES-instance of \ClosestMotif{}. 
Let $K\subseteq V(H)$ be a solution set and consider an associated 
sequence $\Delta$ of basic operations that transforms $M$ to $c(K)$ with cost 
at most $\tau$. We may without loss of generality assume that the cost of the 
sequence $\Delta$ satisfies equality 
in Lemma~\ref{lem:optimum-parameterized-sequence}. 
In particular, from \eqref{eq:total-parameterized-cost} we thus observe
that the parameters $k_{\mathrm{S}}$ and $k_{\mathrm{ID}}$ 
of the sequence $\Delta$ thus satisfy \eqref{eq:cost-comparison}. 
Consider a simple branching walk of size $k$ in $H$ that spans 
the vertices in $K$. From \S\ref{sect:genf} we observe that
there is a corresponding multilinear monomial in $P_k(\vec x,\vec y)$. 
Next observe that we can properly color this monomial 
in the sense of Lemma~\ref{lem:cost-constrained} by 
(i) 
assigning the color $*$ to each of the $k_{\mathrm{ID}}$ 
values $i\in K$ that correspond to elements inserted in $\Delta$,
(ii) 
assigning the substituted color to each of the 
$k_{\mathrm{S}}$ values $i\in K$ that correspond to elements 
of $M$ receiving substitutions in $\Delta$, 
and (iii)
assigning the color $c(i)$ to each
of the remaining $k-k_{\mathrm{S}}-k_{\mathrm{ID}}$ values
$i\in K$ that correspond to elements of $M$ 
that are not touched by $\Delta$.
Furthermore, by \eqref{eq:s-cost} and \eqref{eq:id-cost}, 
this proper coloring has 
$\mathrm{S}$-cost $k_{\mathrm{S}}$ and
$\mathrm{ID}$-cost $k_{\mathrm{ID}}$.
From Lemma~\ref{lem:cost-constrained} we thus have 
that \eqref{eq:q-closest-motif}%
---viewed as a polynomial in the variables $\vec v$, $\vec w$, 
$\vec y$, $\eta_{\mathrm{S}}$, $\eta_{\mathrm{ID}}$---has
at least one monomial whose degrees 
$k_{\mathrm{S}}$ and $k_{\mathrm{ID}}$ satisfy \eqref{eq:cost-comparison}.
\end{proof}

Let us now study the operation of the algorithm in more detail.
We have that the given input is a NO-instance if and only if 
\eqref{eq:q-closest-motif} is not witnessing. 
Thus,
given a NO-instance as input, the algorithm always gives a NO output.

So suppose that the given input is a YES-instance.
Since \eqref{eq:q-closest-motif} is witnessing, there exist
degrees $k_{\mathrm{S}}$ and $k_{\mathrm{ID}}$
that are present in a monomial of \eqref{eq:q-closest-motif}
such that \eqref{eq:cost-comparison} holds. 
In particular, coefficient of the monomial 
$\eta_{\mathrm{S}}^{k_{\mathrm{S}}}\eta_{\mathrm{ID}}^{k_{\mathrm{ID}}}$
computed by the algorithm is an evaluation of a nonzero multivariate 
polynomial of total degree $3k-1$ at a point $(\vec v,\vec w,\vec y)$ 
selected uniformly at random. 
Recalling that $2^b\geq 6k$, Lemma~\ref{lem:dmlsz} thus implies 
that the coefficient is nonzero (and hence the algorithm outputs YES) 
with probability at least 1/2.
This completes the proof of Theorem~\ref{thm:closest-motif}.$\qed$

\section{A Lower Bound Reduction from Set Cover}

\label{sect:lower-bound}

We base our proof of Theorem~\ref{thm:red} on the following
theorem, which can be extracted from the proof of Theorem 4.4 
in a recent paper by Cygan {\em et al.}~\cite{ccc}.

\begin{Thm}[\cite{ccc}]
\label{thm:setcover}
If \SetCover{} can be solved in $O((2-\epsilon)^{n+t})$ time for some $\epsilon>0$ then it can also be solved in $O((2-\epsilon')^{n})$ time, for some $\epsilon'>0$.
\end{Thm}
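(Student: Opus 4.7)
The plan is a self-reduction that removes the ``$+t$'' contribution from the exponent in the hypothesized algorithm, via a case split on the ratio $t/n$. Let $c = 2-\epsilon$ and fix a small constant $\alpha \in (0,1)$ to be determined from $\epsilon$ later. The preprocessing step is to note that, without loss of generality, a minimum cover has size at most $n$, so we can assume $t \le n$ throughout.

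\emph{Regime 1 ($t \le \alpha n$).} Here I would invoke the hypothesized $O(c^{n+t})$-time algorithm directly, obtaining running time $O(c^{(1+\alpha)n})$. Since $\log_2 c < 1$, one can choose $\alpha$ small enough so that $(1+\alpha)\log_2 c < 1$, giving $c^{1+\alpha} \le 2-\epsilon_1$ for some $\epsilon_1 = \epsilon_1(\epsilon,\alpha) > 0$. This already yields the desired $O((2-\epsilon_1)^n)$ bound in this regime.

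\emph{Regime 2 ($t > \alpha n$).} Here the direct application of the hypothesis has exponent $n+t$ that may approach $2n$, so additional work is needed. The plan is to brute-force a constant fraction of the universe and recurse on a smaller effective instance that lies in Regime~1. Concretely, select a subset $B \subseteq U$ of size $\beta n$ for a suitable small $\beta > 0$, and for each $Y \subseteq B$ use standard $O^*(2^{\beta n})$-time dynamic programming to precompute $k_Y$, the minimum number of sets from $\mathcal{S}$ whose union intersects $B$ exactly in $Y$. For each $Y$ I would then invoke the hypothesized algorithm on the residual \SetCover{} instance with universe $U \setminus Y$ and target $t - k_Y$. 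Summing over all $Y \subseteq B$ gives total running time at most $2^{\beta n}\cdot c^{(1-\beta)n + (t - k_Y^{\min})}$ up to polynomial factors, and by balancing $\beta$ against a lower bound on $k_Y$ in the regime $t = \Theta(n)$, this becomes $O((2-\epsilon_2)^n)$ for some $\epsilon_2 > 0$.

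Combining the two regimes and taking $\epsilon' = \min(\epsilon_1, \epsilon_2)$ proves the theorem. The main obstacle is Regime~2: the overhead $2^{\beta n}$ of enumerating restrictions on $B$ must be more than offset by the shrinkage in both the universe size and the target in the recursive call. The delicate technical point is to argue that when $t$ is a constant fraction of $n$, any cover of a linear-size fragment $B$ of the universe must already consume a nontrivial fraction of the overall cover---essentially a pigeonhole argument on solution sizes---so that $k_Y$ itself scales linearly with $\beta n$ in the relevant cases. This quantitative balance, extracted from the proof of Theorem~4.4 of Cygan \emph{et al.}~\cite{ccc}, is precisely what enables the two-regime strategy to close the gap between the exponents $n+t$ and $n$.
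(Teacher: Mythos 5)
The paper does not prove this statement itself; it cites it as extractable from Theorem~4.4 of Cygan et al.~\cite{ccc}. Your Regime~1 is correct and is in fact the analytic core of that argument: once $t\le \alpha n$ for small enough $\alpha$, the hypothesized algorithm already runs in $(2-\epsilon_1)^n$ time. The genuine gap is Regime~2, which as written does not work. First, the decomposition is incoherent: a cover $\mathcal{C}$ of $U$ does not split into a subfamily whose union meets $B$ in some $Y$ plus a cover of $U\setminus Y$, because sets straddle $B$ and $U\setminus B$ (the subfamily realizing $k_Y$ covers elements outside $B$ as well, and the remaining sets must still cover $B\setminus Y$), so soundness and completeness of the guess-and-recurse step both fail; note also that the residual universe has size $n-|Y|$, not $(1-\beta)n$. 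Second, and fatally for the ``balancing,'' the claim that $k_Y$ scales linearly with $\beta n$ when $t=\Theta(n)$ is simply false: a single set of $\mathcal{S}$ may cover all of $B$, so $k_Y$ can be $1$ regardless of $\beta$, and then the recursive call has exponent $\approx n+t-1$, which together with the $2^{\beta n}$ enumeration overhead is worse than the direct application of the hypothesis.

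The missing idea that makes the argument go through is not a universe split but a \emph{set-grouping} self-reduction that forces every instance into your Regime~1. After the (correct) normalization $t\le n$, fix a constant $\lambda$ with $1/\lambda\le\alpha$ and replace $\mathcal{S}$ by the family $\mathcal{S}'$ of all unions of at most $\lambda$ sets of $\mathcal{S}$; this has $|\mathcal{S}'|=O(m^{\lambda})$, i.e.\ polynomial size, leaves the universe essentially unchanged, and (after padding so that $\lambda\mid t$, which perturbs $n$ and $t$ by $O(1)$) $U$ has a cover by $t$ sets of $\mathcal{S}$ if and only if it has a cover by $t/\lambda$ sets of $\mathcal{S}'$: partition a cover into groups of $\lambda$ sets in one direction, and expand each union back into its $\le\lambda$ constituents in the other. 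The new target is $t'\le n/\lambda+O(1)\le\alpha n+O(1)$, so the hypothesized $O((2-\epsilon)^{n+t'})$ run is $O((2-\epsilon)^{(1+\alpha)n})=O((2-\epsilon')^{n})$ by exactly your Regime~1 calculation. In short: keep Regime~1, delete Regime~2, and insert the grouping reduction as the bridge.
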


\subsection{Proof of Theorem~\ref{thm:red}}

Let $(\mathcal{S},t)$ be an instance of \SetCover{}.
We are going to show a polynomial-time reduction to \MaxMotif{} so that in the resulting instance $(H,C,m,c,k)$ we have $\sum_{q\in C}m(q)=k=n+t+1$. Combined with Theorem~\ref{thm:setcover}, this reduction will immediately establish our claim.

The graph $H$ is defined as follows. The vertex set consists of the universe $U$, $t$ copies of the family $\mathcal{S}$, and a special vertex $r$, that is, $V(H) = U \cup \{s^j_i:i=1,2,\ldots,m,\ j=1,2,\ldots,t\} \cup \{r\}$. The edge set is $E(H)=\{\{a,s^j_i\}:a \in S_i\} \cup \{\{r,s^j_i\}:i=1,2,\ldots,m,\ j=1,2,\ldots,t\}$. Let $k=n+t+1$.

To establish part (1), let $C=\{1,2,\ldots,n+t+1\}$ with $m(q)=1$ for each $q\in C$. Furthermore, assign the colors to vertices so that $c(s^j_i)=j$ for every $i=1,2,\ldots,m,\ j=1,2,\ldots,t$ and $c(r)=t+1$. Finally, assign the $n$ colors $t+2,t+3,\ldots,n+t+1$ bijectively to the vertices in $U$.

We show that $(\mathcal{S},t)$ is a YES-instance if and only if $(H,C,m,c,k)$ is a YES-instance. To establish the ``only if'' direction, suppose that $S_{i_1},S_{i_2},\ldots,S_{i_t}$ is a solution of $(\mathcal{S},t)$. Then let $K=\{r\}\cup U \cup \{s^{j}_{i_j}:j=1,2,\ldots,t\}$. It is clear that $c(K)=C$ and that $H[\{r\} \cup \{s^{j}_{i_j}:j=1,2,\ldots,t\}]$ is connected. Since for every $a\in U$ there is $j=1,2,\ldots,t$ such that $a\in S_{i_j}$, so $\{a,s^j_{i_j}\} \in E(G[K])$. It follows that $G[K]$ is connected, and hence $K$ is a solution of $(H,C,m,c,K)$. To establish the ``if'' direction, suppose that $K$ is a solution of $(H,C,m,c,k)$. Then for every $j=1,2,\ldots,t$ there is exactly one $i_j\in\{1,2,\ldots,m\}$ such that $s^j_{i_j} \in K$, since $c(K)=C$. Moreover, since $G[K]$ is connected we observe that for every $a\in U$ there is a $j=1,2,\ldots,t$ such that $\{a,s^j_{i_j}\}\in E(G[K])$. But then $a\in S_{i_j}$ and it follows that $S_{i_1},S_{i_2},\ldots,S_{i_t}$ is a solution of $(\mathcal{S}
,t)$.

To establish part (2), let $C=\{1,2\}$ with $m(1)=n+1$ and $m(2)=t$. Set $c(r)=1$ and $c(a)=1$ for every $a\in U$. All the remaining vertices are colored with 2. The proof of equivalence is similar to part (1) and is left to the reader.
$\qed$

\section*{Acknowledgments}

A preliminary conference abstract of this work has appeared as \cite{our-stacs}.
This research was supported in part by the Swedish Research Council, Grant VR 2012-4730 (A.B.),
the Academy of Finland,
Grants 252083 and 256287 (P.K.), and by the National Science Centre
of Poland, Grant N206 567140 (\L.K.).
The third author thanks Sylwia Antoniuk, Marek Cygan, Michal Debski, and Matthias Mnich for helpful discussions on related topics.

\bibliographystyle{abbrv}

\end{document}